\DeclareMathOperator{\diam}{diam}
\DeclareMathOperator{\dist}{dist}
\DeclareMathOperator{\disc}{disc}
\DeclareMathOperator{\area}{area}
\DeclareMathOperator{\arcsinh}{arcsinh}
\DeclareMathOperator{\real}{Re}
\DeclareMathOperator{\imag}{Im}
\DeclareMathOperator{\lead}{lead}
\DeclareMathOperator{\Sturm}{\texttt{Sturm}}
\DeclareMathOperator{\Descartes}{\texttt{Descartes}}
\DeclareMathOperator{\CSturm}{\texttt{CSturm}}
\DeclareMathOperator{\SqFreeEVAL}{\texttt{SqFreeEVAL}}
\DeclareMathOperator{\SqFreeCEVAL}{\texttt{SqFreeCEVAL}}
\DeclareMathOperator{\CEVAL}{\texttt{CEVAL}}
\DeclareMathOperator{\true}{\textsc{True}}
\DeclareMathOperator{\false}{\textsc{False}}
\DeclareMathOperator{\Vol}{Vol}
\newcommand{\eqlabel}{\tag{\theequation a,\theequation b}}
\theoremstyle{definition}
\newtheorem{proposition}{Proposition}[section]
\newtheorem{lemma}[proposition]{Lemma}
\newtheorem{corollary}[proposition]{Corollary}
\newtheorem{remark}[proposition]{Remark}
\newtheorem{example}[proposition]{Example}
\newenvironment{algorithm}[1]
{\refstepcounter{proposition} 
\setlength{\FrameSep}{0.1in}
\MakeFramed{\setlength{\hsize}{4.7in} \FrameRestore}
\centerline{\bf Algorithm \thesection.\arabic{proposition}. #1\vspace{.05in}}}
{\endMakeFramed}
\begin{document}
\title{Applications of Continuous Amortization to Bisection-based Root Isolation}

\author{Michael A. Burr}
\thanks{This work was partially supported by a grant from the Simons Foundation (\#282399 to Michael Burr).}
\address{Department of Mathematical Sciences, Clemson University, Clemson, SC 29634}
\email{burr2@clemson.edu}

\begin{abstract}
Continuous amortization is a technique for computing the complexity of algorithms, and it was first presented by the author in \cite{BurrKrahmerYap:ContinuousAmortization}.  Continuous amortization can result in simpler and more straight-forward complexity analyses, and it was used in \cite{BurrKrahmerYap:ContinuousAmortization,BurrKrahmer:SqFreeEVAL,SharmaYap:ContinuousAmortization} to provide complexity bounds for simple root isolation algorithms.  This paper greatly extends the reach of continuous amortization to serve as an overarching technique which can be used to compute complexity of many root isolation techniques in a straight-forward manner.  Additionally, the technique of continuous amortization is extended to higher dimensions and to the computation of the bit-complexity of algorithms.  In this paper, six continuous amortization calculations are performed to compute complexity bounds (on either the size of the subdivision tree or the bit complexity) for several algorithms (including algorithms based on Sturm sequences, Descartes' rule of signs, and polynomial evaluation); in each case, continuous amortization achieves an {\em optimal} complexity bound.

\noindent{\em Key words}: root isolation, continuous amortization, bisection algorithms, subdivision algorithms, symbolic algorithms, recursion tree, bit-complexity
\end{abstract}

\maketitle

\section{Introduction}
The technique of continuous amortization was first introduced by the author in \cite{BurrKrahmerYap:ContinuousAmortization} and was used in \cite{BurrKrahmer:SqFreeEVAL} and \cite{SharmaYap:ContinuousAmortization} to compute the size of the subdivision tree for two evaluation-based algorithms.  These algorithms are called \texttt{SqFreeEVAL} and \texttt{EVAL}, and they are simple, numerical, real root isolation algorithms whose primitives are based upon the evaluation of a given polynomial and its derivatives at dyadic points.  Some of the advantages of using continuous amortization in these analyses include that the resulting bounds are optimal (or nearly optimal), and the analyses are much simpler than competing methods, e.g., \cite{Yakoubsohn:bisection,SagraloffYap:CEVAL}.

Continuous amortization is an analysis technique that can be applied to bisection-based algorithms over a domain $D$ in $\mathbb{R}^n$.  We recall that a bisection-based algorithm is one which adaptively subdivides $D$ until the resulting subdomains are small enough so that a (usually simple) terminal condition applies.  To use the technique of continuous amortization, a nonnegative stopping function $F:D\rightarrow\mathbb{R}$ must be constructed, such that the value of $F(\vec{x})$, at a fixed $\vec{x}\in D$, is a lower bound on the size of a subregion of $D$ containing $\vec{x}$ for which the terminal condition fails.  Then, the number of subdivisions performed by the subdivision-based algorithm is $O\!\left(\int d\vec{x}/F(\vec{x})\right)$ (more precise details are given in Section \ref{sec:continuousamortization} and generalizations are given in Sections \ref{sec:complex} and \ref{sec:bit}).

This technique is called continuous amortization because the function $1/F$ can be thought of as a charging function for $D$.  In particular, at points $\vec{x}$ where the algorithm may need to perform a large number of subdivisions in order to satisfy a terminal condition, it follows that the leaf of the subdivision tree whose associated subregion includes $\vec{x}$ may be a very deep node and the value of $1/F(\vec{x})$ may be quite large.  More precisely, the value of $\log_2(\diam(D))-\log_2(F(\vec{x}))$ is an upper bound on depth of the subdivision tree at $\vec{x}$, where $\diam(D)$ is the diameter of the region $D$.  Therefore, the value of $1/F$ at points in $D$ is a charge according to how much work may need to be done near that point, and the integral sums these local costs over the entire domain $D$.

In this paper, we apply the technique of continuous amortization to standard root isolation algorithms including Sturm sequences \cite{CollinsLoos:Sturm}, Descartes' rule of signs \cite{CollinsAkritas:Descartes}, and \texttt{CEVAL} \cite{SagraloffYap:8Point}.  The goal of this paper is to provide a collection of examples which show how continuous amortization can be applied to a significant variety of root isolation algorithms.  In particular, in Section \ref{sec:real}, we apply the technique to real root isolation algorithms and achieve very simple proofs for the computation of the sizes of their subdivision trees.  In Section \ref{sec:complex}, we show how to extend the technique of continuous amortization to two dimensional domains and provide more straight-forward proofs for the computation of the sizes of the subdivision trees for complex root isolation.  Finally, in Section \ref{sec:bit}, we once again extend continuous amortization to compute, not the size of the subdivision tree, but the bit-complexity of a few algorithms.

We stress the goals of this paper are \begin{inparaenum}[(1)]
\item to provide a unifying framework for the analysis of the complexity of root isolation techniques;
\item to expand the collection of examples to which continuous amortization applies;
\item to extend the formula for continuous amortization in two ways: to compute in higher dimensions and to compute values other than the size of the subdivision tree; and
\item to provide simpler proofs than those that appear in the literature for the complexity of standard root isolation techniques.
\end{inparaenum}
We point out that the goal of this paper is {\em not} to improve the known bounds (although improved bounds are provided in some cases), but to exhibit applications of continuous amortization.

\section{Background on Subdivision-based Root Isolation}\label{sec:background}

The collection of literature describing root isolation techniques is much too vast to describe here.  Descriptions of the progress of real root isolation can be found in the surveys \cite{Pan:Optimal,Pan:History,McNamee:Numerical,McNameePan:Efficient}.  In this paper, we focus on the algorithms and papers which have the most direct impact on and relationship with the current paper.  This previous work can be separated into three broad categories: Sturm sequence-based techniques, Descartes' rule of signs-based techniques, and evaluation-based techniques (also called Bolzano's theorem-based techniques).

\subsection{Sturm sequence methods}
An algorithm based on Sturm sequences for isolating the real roots of a polynomial was first presented in \cite{CollinsLoos:Sturm}.  This algorithm's predicate determines the precise number of roots in any interval; because the exact number of roots can be counted, the size of the subdivision tree is optimal.  Some of the main details of the algorithm are given in Section \ref{sec:real:sturm}.  The Sturm algorithm is theoretically interesting, but, in practice, even though the Sturm predicate is very strong, other algorithms are preferred because the preprocessing step for the Sturm algorithm can be prohibitively time consuming, e.g., see \cite{HemmerTsigaridasZafeirakopoulosEmirisKaravelasMourrainCrossBenchmarking, EmirisHemmerKaravelasLimbachMourrainTsigaridasZafeiakopoulosSmall}.

It is standard to judge the complexity of root isolation using the benchmark problem of isolating all of the real roots of a degree $d$ polynomial with integer coefficients of bit-size at most $L$.  In \cite{davenport:85}, it was shown that the complexity of Sturm's subdivision tree is $O(d(L+\ln d))$.  In \cite{eigenwillig-sharma-yap:descartes:06}, an example was provided which shows that this bound is tight in the case where $L\geq\ln d$.  The bit complexity of this algorithm was shown to be $\widetilde{O}(d^4L^2)$ in \cite{davenport:85}, where the $\widetilde{O}$ means that logarithmic factors have been suppressed.  The logarithmic part of this bound was improved in \cite{du-sharma-yap:sturm:07}.  For more of the history and alternate proofs, see \cite{CollinsLoos:Sturm,Johnson:Thesis,reischert:subresultant:97,johnson:root-isolation:98,lickteig-roy:sequences:01,Emiris:Complexity}.

Sturm sequence methods can also be adapted to complex root isolation, as in \cite{Pinkert:1976:EMF:355705.355710,Wilf:1978:GBA:322077.322084,CamargoBrunetto200095}.  This algorithm's predicate can determine the number of roots in any bounded rectangle in the complex plane.  Since the predicate gives the exact number of roots, the size of the subdivision tree is optimal.  Some details of this algorithm are given in  and \ref{sec:ComplexSturm}.  This algorithm is rarely used in practice, however, because it requires the Sturm sequence to be recomputed for every query.  In \cite{du-sharma-yap:sturm:07}, it is shown that the size of the complex Sturm subdivision tree is $O(d(L+\ln d))$ for the benchmark problem of isolating all complex roots of a polynomial of degree $d$ and integer coefficients of bit-size at most $L$.  There, it is also shown that the bit complexity may be $\widetilde{O}(d^5L^3)$.

\subsection{Descartes' rule of signs methods}
An algorithm based on Descartes' rule of signs for isolating the real roots of a polynomial was first presented in \cite{CollinsAkritas:Descartes} using the standard power basis.  The algorithm was also described using the Bernstein basis in \cite{LaneRiesenfeldDescartes}, see also \cite{MourrainRouillierRoyDescartes,Mourrain2002612,BasuPollackRoyRealGeometry}.  This algorithm's predicate determines an upper bound on the number of roots in an interval; because the predicate only provides an upper bound, the subdivision tree may be larger than the optimal tree.  Some of the main details of the algorithm can be found in Section \ref{sec:Descartes}.  In practice, however, Descartes' rule of signs seems to be efficient and practical, see, for example \cite{johnson:root-isolation:98,collins-johnson-krandick:cad:02,Rouillier200433,MourrainRouillierRoyDescartes}.  In particular, a history of improvements of this algorithm up to 2004 can be found in  \cite{Rouillier200433}, and a recent practical improvement can be found in \cite{SagraloffComplexity,SagraloffDescartes}.

For the benchmark problem of isolating all of the real roots of a degree $d$ polynomial with integer coefficients of bit-size at most $L$, it is shown in \cite{eigenwillig-sharma-yap:descartes:06} that the subdivision tree is $O(d(L+\ln d))$.  As in the case for Sturm sequences, this bound is optimal when $L\geq \ln d$.  In addition, the authors show in \cite{eigenwillig-sharma-yap:descartes:06} that the bit complexity is $\widetilde{O}(d^4L^2)$.  For other complexity results, see \cite{johnson:root-isolation:98,MR1383367}

\subsection{Evaluation-based methods}
There are a wide variety of evaluation-based approaches to root isolation, e.g., see \cite{moore:bk,mitchell:robust-ray:90,Henrici:search:70,Yakoubsohn:bisection,BurrKrahmerYap:ContinuousAmortization,BurrKrahmer:SqFreeEVAL,SagraloffYap:8Point,SagraloffYap:CEVAL,Chee:Analytic}.  The predicates in these algorithms all involve evaluating a function and its derivatives at points in a domain.  The predicates in these algorithms are typically fairly weak, but, in most cases, they are very simple to implement and can be evaluated efficiently.  Because the individual predicates are so simple, there is hope that algorithms based on these predicates would be efficient in practice.  These methods are also interesting because, unlike more symbolic techniques, evaluation-based techniques can be applied to small domains, they can be generalized to analytic functions \cite{Yakoubsohn:bisection,Chee:Analytic}, and they can be generalized to higher dimensions  \cite{marchingcube1987,snyder:interval:92,plantinga-vegter:isotopic:04,plantinga:thesis:06,linyap2009,burr+3:subdiv2:10,galehouse:thesis}.

The particular algorithms studied in this paper are the $\SqFreeEVAL$ algorithm and the $\SqFreeCEVAL$ algorithm.  $\SqFreeEVAL$ is based on an algorithm of \cite{mitchell:robust-ray:90}, which is, in turn, based on an algorithm of \cite{moore:bk}.  In \cite{BurrKrahmer:SqFreeEVAL}, it was shown that the size of the subdivision tree for $\SqFreeEVAL$ is $O(d(L+\ln d))$ for the benchmark problem of isolating all of the real roots of a degree $d$ polynomial with integer coefficients of bit-size at most $L$.  As in the case for Sturm sequences and Descartes' rule of signs, this bound is optimal when $L\geq \ln d$.  In \cite{SagraloffYap:8Point,SagraloffYap:CEVAL}, the authors show that the bit-complexity for a variant of $\SqFreeEVAL$ is $\widetilde{O}(d^4L^2)$.  For other complexity results, see \cite{BurrKrahmerYap:ContinuousAmortization,SagraloffYap:8Point,SagraloffYap:CEVAL}.  The $\SqFreeCEVAL$ algorithm is a variant of the $\SqFreeEVAL$ algorithm, but it can be applied to find the complex roots of a polynomial.  The algorithm was presented in \cite{SagraloffYap:8Point,SagraloffYap:CEVAL}.  There, the authors show that the size of the subdivision tree is $\widetilde{O}(d^2L)$ and the bit-complexity of the algorithm is also $\widetilde{O}(d^4L^2)$.

\subsection{Other methods}
Another common symbolic method for isolating the roots of a polynomial is the continued fraction method and was first presented in \cite{vincentcontinuedfractions}; for more history, see \cite{newVincent}.  The size of the subdivision tree for this algorithm for the benchmark problem of isolating all of the real roots of degree $d$ polynomial with integer coefficients of bit-size at most $L$ is $\widetilde{O}(dL)$ when an ideal root bound is used and $\widetilde{O}(d^2L)$ when a more realistic bound is used, see \cite{Sharma2008292} and the references within.  Also, in \cite{Mehlhorn2010677}, it is shown that the bit-complexity is $\widetilde{O}(d^4L^2)$.  We do not study this algorithm in this paper; a brief discussion of the difficulties can be fond in Section \ref{section:continuousamortization}.

The complexity of all of these methods is worse than the algorithm with bit-complexity $\widetilde{O}(d^3(L+\ln d))$ presented in \cite{schonhage:fundamental}.  This algorithm, however, is not based on subdivision, and it approximates all roots simultaneously.  For more details and other algorithms see the surveys \cite{Pan:Optimal,Pan:History,McNamee:Numerical,McNameePan:Efficient} as well as the discussion in \cite{du-sharma-yap:sturm:07}.

\section{Notation, Continuous Amortization, and the Mahler-Davenport Root Bounds}\label{sec:continuousamortization}

In this paper, our primary focus is on the one- and two-dimensional spaces $\mathbb{R}$ and $\mathbb{C}\simeq\mathbb{R}^2$.  Therefore, our notation is specialized to these cases, and we leave it to the reader to reinterpret some of the details of this paper in higher dimensions.  In one dimension, our basic object are intervals of the form $I=[a,b]$.  The {\em width} of this interval, denoted $w(I)$, is $b-a$, and the {\em midpoint} of this interval, denoted $m(I)$, is $\frac{1}{2}(a+b)$.  To {\em bisect} an interval means that the interval is split at its midpoint into two closed intervals $[a,m(I)]$ and $[m(I),b]$.  In this case, the width of each of these subintervals is half of $w(I)$.

In two dimensions, our basic object are axis-aligned squares of the form $S=I_1\times I_2$ where $w(I_1)=w(I_2)$.  The {\em diameter} of this square, denoted $\diam(S)$, is $\sqrt{2}\cdot w(I_1)$, and is equal to the diameter of the smallest disk which covers $S$.  The {\em midpoint} of a square, denoted by $m(S)$, is the point $(m(I_1),m(I_2))$.  To {\em bisect} a square means that the square is split into four squares by bisecting each of the defining intervals.  This operation is the geometric version of subdivision used in a quad-tree.

For $n$-dimensional spaces, our basic object is a hypercube whose diameter is $\sqrt{n}\cdot w(I)$ where $I$ is one of the edges of the hypercube.  In addition, the coordinates of its midpoint are given by the midpoints of its defining intervals.  Finally, bisection of this hypercube divides it into $2^n$ smaller hypercubes analogous to the subdivision in a $K$-$d$ tree; the diameter of each of these smaller hypercubes is half the diameter of the original hypercube.

Throughout this paper, we consider univariate polynomials $p(x)$ of degree $d$.  In order to achieve complexity bounds, these polynomials are square free and have integral coefficients, i.e., $p(x)\in\mathbb{Z}[x]$.  The {\em height} of such a polynomial is the maximum of the absolute values of the coefficients, and is denoted by $\|p\|$.  For later complexity bounds, we assume that $\|p\|<2^L$, in other words, that the coefficients of $p$ can be written with at most $L$-bits; therefore, we call $L$ the {\em logarithmic height} of $p$.

The problem that we will consider in this paper is to solve the benchmark problem of finding {\em all} of the real or complex roots of a polynomial.  From the constraint that $\|p\|<2^L$, it follows that the magnitude of all of the roots is bounded by $2^L$ \cite{Yap:Algebra}.  Therefore, throughout this paper, we assume that our initial {\em benchmark regions} are $[-2^L,2^L]$ in one dimension and $[-2^L,2^L]\times[-2^Li,2^Li]$ in the complex plane.

\subsection{Continuous Amortization}\label{section:continuousamortization}

The bisection algorithms considered in this paper take, as input, a square-free polynomial\footnote{Even though we only consider square-free polynomials in this paper, bisection algorithm can be extended to more general analytic settings, e.g., see \cite{Chee:Analytic}} $p$ with integer coefficients and initial region $D$ and return a partition of $D$ with each subregion identified as having exactly 0 roots or exactly 1 root.  On a high level, these bisection algorithms are based on a predicate $B$ on intervals in $\mathbb{R}$ or on squares in $\mathbb{C}$.  Consider a subregion $S$ of $D$, if the predicate applied to $B(S)$ is true, then $S$ is terminal, and, if the predicate applied to $S$ is false, then $S$ is bisected and the predicate is applied to its children.  

We will now describe continuous amortization in detail for the one-dimensional setup, as expressed in \cite{BurrKrahmerYap:ContinuousAmortization,BurrKrahmer:SqFreeEVAL}.  Let $I$ be an initial interval and $B$ a Boolean function on subintervals of $I$.  The algorithm constructs a partition $P$ of $I$ such that for each $J$ in $P$, $B(J)$ is true.  Initially, $P=\{I\}$.

\begin{algorithm}{General Bisection Algorithm (cf. \cite{BurrKrahmer:SqFreeEVAL,Mantzaflaris20112312})}\label{algorithm:general}
\noindent Repeatedly bisect each $J\in P$ until the following condition holds:\\[.15cm]
\indent $B(J)$ is true.
\end{algorithm}

We then call a nonnegative function $F:I\rightarrow\mathbb{R}$ a {\em stopping function} if it has the following property: for any pair $(x,J)$, with $J$ a subinterval of $I$ containing the point $x$ in $I$; if $w(J)<F(x)$, then $B(J)$ is true.  Therefore, $F(x)$ is a lower bound on the size of a non-terminal subinterval of $I$ containing $x$.  With this setup, we can apply the Continuous Amortization Theorem to compute the size of the partition.

\begin{lemma}[Continuous Amortization \cite{BurrKrahmerYap:ContinuousAmortization,BurrKrahmer:SqFreeEVAL}]\label{lemma:continuousamortization}
Let $F$ be a stopping function for Algorithm \ref{algorithm:general}, and let $Q(I)$ be the final partition produced by the algorithm when applied to the interval $I$.  Then,
$$
\#Q(I)\leq\max\left\{1,\int_I\frac{2dx}{F(x)}\right\}.
$$
If the algorithm does not terminate, then the integral is infinite.
\end{lemma}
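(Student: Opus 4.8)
The approach is to charge the quantity $\int_I 2\,dx/F(x)$ against the leaves of the subdivision tree, giving each leaf a budget of at least $1$ drawn from an essentially disjoint piece of $I$; summing the budgets then yields the stated bound. First I would record the finite or infinite binary tree $T$ produced by a run of Algorithm~\ref{algorithm:general} on $I$: the root is $I$; a node $J$ is \emph{internal} exactly when $B(J)$ is false, in which case its two children are the two halves of $J$; and the \emph{leaves} are the subintervals $J$ with $B(J)$ true. Then $Q(I)$ is precisely the set of leaves, these leaves cover $I$, and two distinct leaves meet in at most a shared endpoint, hence in a set of measure zero. The algorithm terminates if and only if $T$ is finite.

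The crux is a single local estimate. Suppose $J$ is a node of $T$ with parent $J'$. Since $J'$ is internal, $B(J')$ is false, so the contrapositive of the defining property of a stopping function gives $F(x)\le w(J')$ for every $x\in J'$; in particular $F(x)\le w(J')=2\,w(J)$ for every $x\in J$. Regarding $2/F$ as a nonnegative function whose integral is well defined in $[0,\infty]$ (and is trivially $+\infty$ wherever $F$ vanishes), this yields
\[
\int_J \frac{2\,dx}{F(x)} \;\ge\; \int_J \frac{2\,dx}{2\,w(J)} \;=\; 1 .
\]
The factor $2$ in the integrand is exactly what compensates for the parent $J'$ being twice as wide as the child $J$.

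For the main inequality, if $\#Q(I)=1$ then $Q(I)=\{I\}$ and the bound holds because of the ``$\max$'' with $1$; otherwise $I$ is internal, so every leaf has a parent, and summing the local estimate over the essentially disjoint leaves that cover $I$ gives $\int_I 2\,dx/F(x)=\sum_{J\in Q(I)}\int_J 2\,dx/F(x)\ge \#Q(I)$. If instead the algorithm does not terminate, then $T$ is an infinite binary tree, so K\"onig's lemma produces an infinite chain $I=J_0\supset J_1\supset J_2\supset\cdots$ of internal nodes. For $k\ge 1$ let $J_k'$ be the sibling of $J_k$, i.e.\ the complementary half of $J_{k-1}$; its parent is $J_{k-1}$, so the local estimate gives $\int_{J_k'}2\,dx/F(x)\ge 1$. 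The intervals $J_1',J_2',\dots$ are pairwise disjoint up to measure zero, since for $m>k$ one has $J_m'\subseteq J_{m-1}\subseteq J_k$ while $J_k'=J_{k-1}\setminus J_k$ meets $J_k$ in at most a point. Summing over $k$ therefore forces $\int_I 2\,dx/F(x)=\infty$.

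The local estimate and the two summations are routine once the tree is set up correctly; the only genuinely delicate step is the non-terminating case, where the nested intervals $J_k$ along an infinite path overlap and cannot be added directly, and the device of charging to the siblings $J_k'$ instead is what restores disjointness and forces the sum to diverge. A minor bookkeeping point to state explicitly is that the leaves (and the sibling intervals) overlap only on finite sets of shared endpoints, which are null and hence irrelevant to all the integrals involved.
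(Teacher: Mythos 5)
Your argument for the main inequality is essentially the same as the paper's (see the proof of Proposition \ref{prop:ndimensional}, which reproduces the one-dimensional argument from \cite{BurrKrahmer:SqFreeEVAL}): both charge each leaf $J$ at least $1$ from $\int_J 2\,dx/F(x)$ by using non-termination of the parent, then sum over the essentially disjoint leaves. A small technical difference, actually in your favor: you bound $F(x)\le 2w(J)$ directly from the parent's width, whereas the paper picks a point $z\in J$ where $F$ is maximal and uses $F(z)\le 2w(J)$; the latter implicitly assumes $F$ attains its supremum on $J$, which your version avoids needing. The genuine divergence is in the non-terminating case. The paper argues that the same integral bounds the size of \emph{every intermediate} partition (each element of an intermediate partition, other than $I$ itself, already has a non-terminal parent, so the local estimate still applies), and since these sizes grow without bound, the integral must be infinite. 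You instead invoke K\"onig's lemma to extract an infinite nested chain of internal nodes and charge $1$ to each sibling interval $J_k'$, which are pairwise essentially disjoint, so the integral diverges. Both are valid; your version is more explicit and self-contained, at the modest cost of invoking K\"onig's lemma, while the paper's is shorter but leans on the reader to verify that the leaf estimate applies uniformly to intermediate partitions. Either way, the proof is correct and complete.
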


In this setting, $\#Q(I)$ is the number of leaves of the subdivision tree; therefore, the size of the subdivision tree is $2\#Q(I)-1$.  For the cases being considered in this paper, i.e., for the isolation of the roots of $p$, the predicate $B$ consists of two tests: {\em inclusion} and {\em exclusion predicates}.  When an exclusion predicate is true on an interval $J$, then there are no roots of $p$ in $J$, and when an inclusion predicate is true on an interval $J$, then there is exactly one root of $p$ in $J$.  If both the inclusion and exclusion predicates fail, then $B(J)$ is false, and $J$ is bisected.  For real root isolation, these predicates are often applied to open intervals; to avoid missing roots, one must evaluate the polynomial $p$ at midpoints to check for roots on the boundaries of these intervals.  Therefore, the terminal intervals isolate the roots of $p$.  In Sections \ref{sec:complex} and \ref{sec:bit} we will show how to extend the continuous amortization theorem to $\mathbb{R}^n$ and to compute the bit-complexity of an algorithm.

At this point we note that as presented above, continuous amortization is not formulated for the continued fraction algorithm for root isolation.  In particular, the continued fraction algorithm is not based on bisection, e.g., the positive lower bound (\texttt{PLB}) on the smallest positive real roots can split off intervals of varying sizes.  It is possible to adapt continuous amortization to the continued fraction algorithm, but we do not study that possibility here.

\subsection{Mahler-Davenport Bound}

In each of the computations below, the final step in bounding the integral developed from applying continuous amortization requires bounding the sum of logarithms of distances between roots.  Such a bound is given by the Mahler-Davenport bound

\begin{lemma}[Mahler-Davenport root separation bound \cite{eigenwillig-sharma-yap:descartes:06}]\label{lemma:mahlerdavenport}
Let $p$ be a square-free complex polynomial of degree $d$ with roots $V=\{\alpha_1,\cdots,\alpha_d\}$.  Let $G=(V,E)$ be a directed graph on the roots of $p$ where $E=\{(\alpha_{i_1},\alpha_{j_1})\}$ such that 
\begin{inparaenum}[(1)]
\item the directed edges point in the direction of decreasing magnitude,
\item the graph is acyclic, and 
\item the in-degree of any node is at most 1.
\end{inparaenum}
Then, 
$$
\prod_{(\alpha_{i_1},\alpha_{j_1})\in E}|\alpha_{i_1}-\alpha_{j_1}|\geq\sqrt{|\disc(p)|}\cdot M(p)^{-(d-1)}\cdot\left(\frac{d}{\sqrt{3}}\right)^{-|E|}\cdot d^{-d/2}.
$$
Here $|E|$ is the number of edges in the graph, $\disc(p)$ is the discriminant of $p$, $M(p)$ is the Mahler measure of $p$, i.e., $M(p)=\lead(p)\prod_{i=1}^d\max\{1,|\alpha_i|\}$, and $\lead(p)$ is the leading coefficient of $p$.
\end{lemma}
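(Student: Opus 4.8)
The plan is to reduce the product over edges of $|\alpha_{i_1}-\alpha_{j_1}|$ to the discriminant $\disc(p) = \lead(p)^{2d-2}\prod_{i<j}(\alpha_i-\alpha_j)^2$ by bounding the ``missing'' factors, i.e., the differences $|\alpha_i - \alpha_j|$ that appear in $\disc(p)$ but not in the graph $G$. First I would write $|\disc(p)| = \lead(p)^{2d-2}\prod_{i\ne j}|\alpha_i-\alpha_j|$ (each unordered pair counted twice, once in each direction), and split the ordered pairs $(i,j)$ with $i\ne j$ into three classes: those that form an edge of $E$, those whose reversal $(j,i)$ forms an edge of $E$, and those for which neither orientation is an edge. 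The first class contributes exactly $\prod_{(\alpha_{i_1},\alpha_{j_1})\in E}|\alpha_{i_1}-\alpha_{j_1}|$, which is what we want to bound below; the second class contributes the same product again; so $|\disc(p)|$ equals the square of our target product times the contribution $R$ of the remaining ordered pairs. Hence $\prod_{E}|\alpha_{i_1}-\alpha_{j_1}| = \sqrt{|\disc(p)|}\cdot R^{-1/2}$, and everything comes down to an \emph{upper} bound on $R = \prod |\alpha_i-\alpha_j|$ over the ``non-edge'' ordered pairs.

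The key estimate is a bound of the form $|\alpha_i - \alpha_j| \le \sqrt{d}\cdot\max\{1,|\alpha_i|\}\cdot\max\{1,|\alpha_j|\}$, or a variant thereof, valid for all pairs of roots; this is the standard device relating root differences to the Mahler measure. Summing the logarithms: the product of $\max\{1,|\alpha_i|\}\max\{1,|\alpha_j|\}$ over all ordered pairs $(i,j)$ with $i\ne j$ is $\left(\prod_i\max\{1,|\alpha_i|\}\right)^{2(d-1)} = (M(p)/\lead(p))^{2(d-1)}$, which supplies the $M(p)^{-(d-1)}$ factor (after taking the square root in $R^{-1/2}$ and absorbing $\lead(p)$, which is at least $1$ for integer — or here merely for the bound to be non-vacuous — and in any case $\ge 1$ in magnitude once we note $M(p)\ge|\lead(p)|$). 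The number of non-edge ordered pairs is $d(d-1) - 2|E|$, but here I would instead keep only the $|E|$ oriented \emph{reverse} edges separate and treat the genuinely unrelated pairs with the cruder $\sqrt{d}$ factor; matching the exact shape $\left(d/\sqrt3\right)^{-|E|}d^{-d/2}$ in the statement requires being careful about which pairs get the sharper bound. Specifically, for an edge $(\alpha_{i_1},\alpha_{j_1})\in E$ the reverse pair $(\alpha_{j_1},\alpha_{i_1})$ satisfies $|\alpha_{j_1}-\alpha_{i_1}| \le 2|\alpha_{i_1}|\le 2\max\{1,|\alpha_{i_1}|\}$ (using that the edge points toward decreasing magnitude, so $|\alpha_{j_1}|\le|\alpha_{i_1}|$), and a factor of $2/\sqrt{3}$ versus $\sqrt{d}\cdot(\text{something})$ bookkeeping produces the $(d/\sqrt3)^{-|E|}$; the $d^{-d/2}$ then emerges from applying the $\sqrt d$-bound to roughly $d^2/2$ unordered unrelated pairs and taking the square root.

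I expect the main obstacle to be precisely this combinatorial bookkeeping — partitioning the $d(d-1)$ ordered pairs according to their relationship with the graph $G$ and tracking exactly which of the three elementary inequalities ($|\alpha_i-\alpha_j|\le 2\max\{1,|\alpha_i|\}$ when $|\alpha_j|\le|\alpha_i|$, the symmetric Mahler-measure bound, and the trivial bound) is applied to each class — so that the exponents $-(d-1)$, $-|E|$, and $-d/2$ and the constants $\sqrt3$ come out exactly as stated, rather than with slightly worse constants. The analytic inputs are all elementary; the acyclicity and in-degree-$\le 1$ hypotheses on $G$ are what guarantee each root is the head of at most one edge, which is exactly what lets the $M(p)^{-(d-1)}$ exponent (rather than something larger) suffice. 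I would organize the write-up by first isolating the three inequalities as a preliminary claim, then doing the pair-counting, and finally assembling the product and taking the square root.
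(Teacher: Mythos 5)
The paper does not actually prove Lemma~\ref{lemma:mahlerdavenport}; it states it and cites \cite{eigenwillig-sharma-yap:descartes:06}, so there is no ``paper's own proof'' to compare against. I will therefore assess the proposal on its own terms.

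Your reduction $|\disc(p)| = |\lead(p)|^{2d-2}\prod_{i\ne j}|\alpha_i-\alpha_j|$ and the splitting of ordered pairs into edges, reversed edges, and the rest is a reasonable starting point, and it correctly identifies that the problem becomes an \emph{upper} bound on the complementary product $R$. The fatal gap is in the way you propose to bound $R$. Applying a pointwise estimate of the form $|\alpha_i-\alpha_j|\le c\cdot\max\{1,|\alpha_i|\}\max\{1,|\alpha_j|\}$ to each of the $d(d-1)-2|E|$ non-edge ordered pairs produces a constant factor $c^{\,d(d-1)-2|E|}$, which after taking the square root is $c^{\Theta(d^2)}$. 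But the lemma's constant is $d^{d/2}(d/\sqrt{3})^{|E|}$, which is $2^{\Theta(d\log d)}$ — exponentially smaller. Your own bookkeeping signals the problem: ``applying the $\sqrt d$-bound to roughly $d^2/2$ unrelated pairs and taking the square root'' gives $d^{\Theta(d^2)}$, not $d^{d/2}$. No amount of reshuffling constants among the three classes of pairs can bridge a gap between exponent $d^2$ and exponent $d\log d$. In short, a direct per-pair bound on the complementary product cannot recover the stated constant.

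The missing idea — which is what makes the bound \emph{determinantal} rather than a naive product estimate — is to write $\prod_{i<j}(\alpha_i-\alpha_j)$ as the Vandermonde determinant $\det[\alpha_i^{\,k}]_{i,k}$, perform one row subtraction per directed edge (the in-degree $\le 1$ condition guarantees each row is modified at most once, acyclicity lets you order the subtractions, and ``decreasing magnitude'' controls which root dominates the resulting entries), extract the factor $|\alpha_{i_1}-\alpha_{j_1}|$ from each modified row, and then apply Hadamard's inequality. Hadamard bounds $|\det V|$ by a product of only $d$ row norms, each of size roughly $\sqrt{d}\cdot\max\{1,|\alpha|\}^{d-1}$ for an untouched row and roughly $(d^{3/2}/\sqrt{3})\cdot|\alpha_{i_1}-\alpha_{j_1}|\cdot\max\{1,|\alpha_{i_1}|\}^{d-2}$ for a modified one; multiplying $|E|$ modified rows against $d-|E|$ unmodified ones yields precisely $d^{d/2}(d/\sqrt{3})^{|E|}$ together with the Mahler-measure factor $M(p)^{d-1}/|\lead(p)|^{d-1}$. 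This is where the hypotheses on $G$ are actually used, and it is the step your proposal does not contain. Without it, the proof does not go through.
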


In the cases below, we use a bound on the negative of the logarithm of the distances between roots.  In this case, the Mahler-Davenport bound reduces to
$$
\sum_{(\alpha_{i_1},\alpha_{j_1})\in E}-\ln(|\alpha_{i_1}-\alpha_{j_1}|)\leq-\frac{1}{2}\ln(|\disc(p)|)+(d-1)\ln(M(p))+|E|(\ln d-\ln\sqrt{3})+\frac{d}{2}\ln d.
$$
In the situation discussed above, i.e., where $p$ is restricted to have integer coefficients and logarithmic height $L$, the Mahler-Davenport bound can be simplified.  In particular, since $p$ is square free and $p$ has integer coefficients, the discriminant of $p$ is a nonzero integer.  Therefore, $-\frac{1}{2}\ln(|\disc(p)|)$ is a nonpositive number.  In addition, the Mahler measure $M(p)$ is bounded by $\|p\|_2$, the $2$-norm of the coefficients of $p$, see \cite{Yap:Algebra}.  In turn, the $2$-norm is bounded by $\sqrt{d+1}\cdot 2^L$.  It follows that for $d\geq 1$, $(d-1)\ln(M(p))\leq dL+\frac{d}{2}\ln d$.  Combining these facts, we have that 
\begin{equation}\label{equation:MahlerDavenport}
\sum_{(\alpha_{i_1},\alpha_{j_1})\in E}-\ln(|\alpha_{i_1}-\alpha_{j_1}|)\leq dL+(d+|E|)\ln d.
\end{equation}
Finally, in the Mahler-Davenport bound, since the in-degree of each node is at most $1$, there are at most $d$ edges, so Inequality \ref{equation:MahlerDavenport} is bounded above by $dL+2d\ln d$.  To simplify our use of the Mahler-Davenport bound below, we use the following corollary:

\begin{corollary}\label{corollary:MahlerDavenport}
Let $p$ be a square-free polynomial with integer coefficients of degree $d$ and roots $V=\{\alpha_1,\cdots,\alpha_d\}$.  Let $G=(V,E)$ on the roots of $p$ such that the valence of any root is bounded by $k$, then 
$$
\sum_{(\alpha_{i_1},\alpha_{j_1})\in E}-\ln(|\alpha_{i_1}-\alpha_{j_1}|)\leq kdL+(kd+|E|)\ln d=O(kdL+kd\ln d).
$$
\end{corollary}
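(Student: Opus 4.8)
The plan is to reduce the statement to the special case already recorded in Inequality~\ref{equation:MahlerDavenport}, which bounds $\sum -\ln|\alpha_{i_1}-\alpha_{j_1}|$ for any directed graph satisfying the three hypotheses of Lemma~\ref{lemma:mahlerdavenport}. The idea is that an arbitrary graph $G=(V,E)$ of maximum valence $k$ on the roots of $p$ decomposes into at most $k$ edge-disjoint pieces, each of which \emph{does} satisfy those hypotheses; applying Inequality~\ref{equation:MahlerDavenport} to each piece and summing then gives the claimed bound.

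First I would orient the edges of $G$. Choose a linear order $\prec$ on $V$ that refines the ordering of the roots by magnitude, i.e., $|\alpha_i|>|\alpha_j|$ implies $\alpha_i\prec\alpha_j$, with ties among equal-modulus roots broken arbitrarily, and orient each edge of $G$ from its $\prec$-smaller endpoint to its $\prec$-larger endpoint. Since $\prec$ refines the magnitude ordering, every oriented edge points toward an endpoint of no larger modulus, so hypothesis~(1) of Lemma~\ref{lemma:mahlerdavenport} holds (distinctness of the roots, valid since $p$ is square-free, makes this relaxation to ``non-increasing modulus'' harmless), and because the orientation is induced by a linear order the resulting directed graph is acyclic, giving hypothesis~(2).

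Next I would split $E$ by an in-degree colouring. For each vertex $v\in V$, the number of edges oriented into $v$ is at most $\deg_G(v)\le k$; label these incoming edges with distinct colours drawn from $\{1,\dots,k\}$. As every directed edge has a unique head, this assigns exactly one colour to each edge, and setting $E_j=\{e\in E:e\text{ has colour }j\}$ partitions $E$ into $k$ (possibly empty) classes. In each directed graph $(V,E_j)$ every vertex has at most one incoming edge, so hypothesis~(3) holds, while hypotheses~(1) and~(2) are inherited from $G$. Applying Inequality~\ref{equation:MahlerDavenport} to each $(V,E_j)$ gives $\sum_{(\alpha,\beta)\in E_j}-\ln|\alpha-\beta|\le dL+(d+|E_j|)\ln d$; summing over $j=1,\dots,k$ and using $\sum_j|E_j|=|E|$ yields $\sum_{(\alpha,\beta)\in E}-\ln|\alpha-\beta|\le kdL+(kd+|E|)\ln d$. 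Finally $|E|\le kd/2\le kd$ since the valences sum to $2|E|\le kd$, so the right-hand side is $O(kdL+kd\ln d)$.

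The argument uses no new ingredients beyond Lemma~\ref{lemma:mahlerdavenport} and Inequality~\ref{equation:MahlerDavenport}; the only points requiring attention are that the magnitude-orientation can be made acyclic even when some roots share a modulus (handled by the refining linear order) and that the colouring genuinely leaves each colour class with in-degree at most one. I expect no real obstacle here: this is essentially a bookkeeping reduction to the single-graph bound.
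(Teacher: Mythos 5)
Your proposal is correct and matches the paper's own proof sketch: orient the edges by magnitude, decompose $E$ into at most $k$ colour classes of in-degree one, and apply the Mahler--Davenport bound to each class. You are somewhat more careful than the paper's sketch on two minor points (using a refining linear order to guarantee acyclicity under ties in modulus, and colouring only the \emph{incoming} edges so each edge gets a unique colour), but the argument is essentially the same.
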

\noindent{\em Proof Sketch.} Begin by orienting each of the edges of $G$ according to the conditions in Lemma \ref{lemma:mahlerdavenport}.  Since the valence of any root is bounded by $k$, the in-degree of any root is also bounded by $k$.  For each root, label its incident edges starting with the value 1.  Let $G_i$ be the subgraph of $G$ consisting of all edges of $G$ with label $i$.  Each $G_i$ satisfies the conditions of the Mahler-Davenport bound, so the result above is achieved by applying the Mahler-Davenport bound at most $k$ times.

\section{Real Root Isolation}\label{sec:real}
In this section, we apply the continuous amortization technique to the most common bisection-based symbolic techniques for real root isolation.  In particular, we study Sturm sequences \cite{CollinsLoos:Sturm} and Descartes' rule of signs \cite{CollinsAkritas:Descartes}.  Our results match the best bounds in the literature.

\subsection{Sturm Sequences}\label{sec:real:sturm}
One of the most powerful and well-known methods to isolate the real roots of a univariate polynomial is to use Sturm sequences.  The Sturm sequence for a square-free\footnote{Note that Sturm sequences can be applied to polynomials which are not square-free, see \cite{Yap:Algebra}, but we do not consider that option here.} and univariate polynomial $p$ is the following sequence of polynomials: $p_0:=p$, $p_1:=p'$, and $p_i:=-\text{Remainder}(p_{i-2},p_{i-1})$ (let $N$ be the step where this sequence terminates).  At a particular point $a$, the variation in this sequence is the number of sign changes of the sequence $(p_0(a),\cdots,p_N(a))$ (zeros do not count as sign changes).

\begin{lemma}[Sturm's Theorem \cite{SturmOriginal}]
Let $p$ be a square-free univariate polynomial and $(a,b]$ a real interval.  The number of distinct real roots in the interval is the difference between the number of sign changes at $a$ and $b$.
\end{lemma}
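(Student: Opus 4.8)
The plan is to study the integer-valued function $V(x)$ that counts the sign changes in the evaluated Sturm sequence $(p_0(x),\dots,p_N(x))$, where zeros are deleted before counting. I would show that $V$ is piecewise constant, changing only as $x$ crosses a real zero of one of the $p_i$, and then determine exactly how it changes at such a point; summing these jumps between $a$ and $b$ then yields the claimed identity, namely $\#\{\text{real roots of }p\text{ in }(a,b]\}=V(a)-V(b)$.

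First I would extract the consequences of square-freeness. Since $\gcd(p,p')$ is a nonzero constant and the recursion $p_i=-\mathrm{Remainder}(p_{i-2},p_{i-1})$ is, up to signs, the Euclidean algorithm applied to $p_0=p$ and $p_1=p'$, the last term $p_N$ is a nonzero constant and $\gcd(p_{i-1},p_i)$ divides $p_N$ for every $i$. Hence no two consecutive terms of the sequence share a real zero; in particular $p$ and $p'$ have no common real zero, so every real zero of $p$ is simple.

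Next come the two local analyses. (i) If $0<i<N$ and $p_i(c)=0$, then the division identity $p_{i-1}=q_ip_i-p_{i+1}$ gives $p_{i+1}(c)=-p_{i-1}(c)$, and by the previous paragraph neither of these vanishes; thus $p_{i-1}$ and $p_{i+1}$ keep strictly opposite signs on a neighbourhood of $c$, and a short case check (on the sign of $p_i$) shows the triple $(p_{i-1},p_i,p_{i+1})$ contributes exactly one sign change on either side of $c$ and at $c$ itself, so $V$ is unchanged as $x$ passes through $c$. (ii) If $p(c)=0$, then $p_1(c)=p'(c)\neq 0$ and $p$ is strictly monotone through $c$, so the pair $(p_0,p_1)$ reads $(-,+)$ (or $(+,-)$) just left of $c$ and $(+,+)$ (or $(-,-)$) just right of $c$; combined with (i), which absorbs any simultaneous zeros of later terms, this shows $V$ decreases by exactly $1$ as $x$ increases through a root of $p$.

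Finally I would assemble these facts: $V$ can only change at zeros of the $p_i$; by (i) the changes at zeros of the interior terms cancel, and by (ii) each root of $p$ contributes a drop of exactly $1$. For endpoints $a,b$ that are zeros of no $p_i$ this already gives $V(a)-V(b)=\#\{\text{roots of }p\text{ in }(a,b]\}$, and the general half-open case follows by inspecting the one-sided limits of $V$ at a (necessarily simple) root of $p$ --- one verifies $V(a^+)=V(a)$ and $V(b^-)=V(b)+1$ --- so that a root at $b$ is counted while a root at $a$ is not. I expect this endpoint bookkeeping with the $(a,b]$ convention, rather than the two sign analyses, to be the point that demands the most care.
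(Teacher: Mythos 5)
The paper does not prove this lemma; it is stated as a classical result and cited to \cite{SturmOriginal}, then used as a black box. So there is no in-paper argument to compare against, and the appropriate question is simply whether your proof is correct.

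It is, and it is the standard textbook argument. The two structural facts you extract from square-freeness (the sequence terminates in a nonzero constant; consecutive terms have no common real zero) are exactly what is needed, and the two local sign analyses are carried out correctly: at a zero $c$ of an interior $p_i$ the identity $p_{i+1}(c)=-p_{i-1}(c)$ forces the triple $(p_{i-1},p_i,p_{i+1})$ to carry exactly one sign change in a full neighbourhood of $c$ and at $c$ itself, regardless of how $p_i$ behaves, so $V$ is unchanged; at a simple zero $c$ of $p$, $p_1=p'$ keeps a fixed nonzero sign while $p_0=p$ passes from opposite to like sign, so the pair $(p_0,p_1)$ loses exactly one sign change and $V$ drops by one. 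Your remark that case (i) \emph{absorbs} simultaneous zeros of nonconsecutive later terms is the right way to see why these local analyses may be combined without interference: the sign-change count decomposes over consecutive pairs, and $p_1(c)\neq 0$ decouples the $(p_0,p_1)$ pair from the triples further down. The endpoint bookkeeping with the $(a,b]$ convention is the only place one can slip, and your one-sided limits $V(a^+)=V(a)$ and $V(b^-)=V(b)+1$ are correct (both follow from the case-(ii) analysis together with the zero-deletion convention), giving exactly ``count $b$, not $a$.'' One cosmetic point: the phrase ``$\gcd(p_{i-1},p_i)$ divides $p_N$'' is a slightly roundabout way of saying that, up to units, $\gcd(p_{i-1},p_i)$ is \emph{invariant} along the Euclidean recursion and hence equals the nonzero constant $\gcd(p,p')$; stating the invariance directly would be cleaner, but the conclusion you draw (no shared real zero among consecutive terms) is correct.
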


Sturm sequences can be used to create inclusion and exclusion predicates in the following way.  An interval $J$ is excluded if both of its endpoints have the same variation; in this case, $J$ contains no roots.  Similarly, an interval $J$ is included if the difference in variation between its endpoints is exactly 1; in this case, $J$ contains exactly one root.  Therefore, our predicate $B_{\Sturm}$ for Sturm sequences is based on the number of sign changes of the Sturm sequence at the endpoints of $J$, or, equivalently, the number of real roots in $J$.  
$$
B_{\Sturm}(J)=\begin{cases}\true&\text{Variation $=$ 0 or 1}\\\false&\text{Variation $\geq$ 2}\end{cases}.
$$

We next derive a stopping function for the Sturm predicate.  Let the real roots of $p$ be $\alpha_1<\cdots<\alpha_k$, and let $x$ be any point in the benchmark interval $I=[-2^L,2^L]$.  Define $\dist_2(x,\{\alpha_i\})$ to be the distance between $x$ and the second closest real root of $p$.  We derive a lower bound on the width of an interval $J$ that contains $x$ as well as two roots (and, therefore, is not terminal).  Suppose that $J$ contains both $x$ as well as two roots $\alpha_{i_1}$ and $\alpha_{i_2}$.  Assume w.l.o.g. that $\alpha_{i_1}$ is not further from $x$ than $\alpha_{i_2}$, i.e., $\dist(x,\alpha_{i_1})\leq\dist(x,\alpha_{i_2})$.  The value of $\dist(x,\alpha_{i_1})$ the distance between $x$ and the closest real root of $p$, while $\dist(x,\alpha_{i_2})=\dist_2(x,\{\alpha_i\})$.  Since $J$ contains both $x$ and $\alpha_{i_2}$, it is necessary that $w(J)>\dist_2(x,\{\alpha_i\})$.  Therefore, our stopping function will restrict $J$ by $\dist_2(x,\{\alpha_i\})$.  In particular, let $F_{\Sturm}$ be the stopping function for the Sturm sequence algorithm, then
$$
F_{\Sturm}=\dist_2(x,\{\alpha_i\}).
$$

\begin{remark}
Note that, in general, picking a stopping function is a very delicate process.  The stopping function must be simple enough to be integrable, but precise enough to lead to an interesting bound on the complexity of the algorithm.  The stopping function defined above is not the largest possible stopping function for $B_{\Sturm}$ (larger stopping functions are better because the corresponding bound on widths of intervals is more relaxed).  One could simply state that the stopping function at $x$ is the width of the smallest interval containing $x$ and two roots.  This stopping function may be larger than the function above when $x$ is between the two closest real roots and the third closest real root is also near, e.g., when  $\alpha_{i_2}$ and $\alpha_{i_2-1}$ are closer to each other than $x$ is to $\alpha_{i_1}$, see Figure \ref{sturm:smallinterval}.  While the ideal stopping function is a better bound on the depth of the tree, it is more complicated to integrate.
\begin{figure}[htb]
\center
\begin{tikzpicture}
\draw (0,0) node[below] {$x$};
\draw (-2.25,0) -- (1.25,0);
\filldraw (1,0) circle (2pt) node[below]{$\alpha_{i_1}$}
	 (-1.25,0) circle (2pt) node[below]{$\alpha_{i_2}$}
	 (-2,0) circle (2pt)node[below] {$\alpha_{i_2-1}$};
\draw (-.1,-.1) -- (.1,.1) 
	(-.1,.1) -- (.1,-.1);
\end{tikzpicture}
\caption{The smallest interval containing $x$ and two roots does not contain the two closest roots to $x$.  Instead, it contains the second and third closest roots to $x$.  Therefore, a function which returns the smallest interval containing $x$ and two other roots can be complicated.}
\label{sturm:smallinterval}
\end{figure}
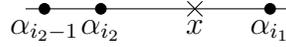
\end{remark}

Continuing with the computation, we must integrate the reciprocal of the stopping function $F_{\Sturm}$; we proceed by explicitly describing the stopping function.  For each root $\alpha_i$, define $I_{\alpha_i}$ to be the set of points of $I$ which are closest to $\alpha_i$, i.e., the one-dimensional Voronoi cell of $\alpha_i$.  In this case,
$$
I_{\alpha_i}=\begin{cases}\left[-2^L,(\alpha_1+\alpha_2)/2\right]&i=1\\
\left[(\alpha_{i-1}+\alpha_i)/2,(\alpha_i+\alpha_{i+1})/2\right]&1<i<k\\
\left[(\alpha_{k-1}+\alpha_k/2,2^L)\right]&i=k
\end{cases}.
$$
For any $x\in I_{\alpha_i}$, the second closest real root is either $\alpha_{i-1}$ or $\alpha_{i+1}$ (provided such roots exist).  Define $I_{\alpha_i,L}$ to be the subinterval of $I_{\alpha_i}$ consisting of points whose second closest real root is to the left of $\alpha_i$, i.e., the second closest real root is $\alpha_{i-1}$.  Similarly, define $I_{\alpha_i,R}$ to be the points whose closest real root is $\alpha_i$ and whose second closest real root is to the right of $\alpha_i$.  Note that $I_{\alpha_1}=I_{\alpha_1,R}$ since there are no real roots to the left of the smallest real root, and $I_{\alpha_k}=I_{\alpha_k,L}$ since there are no roots to the right of $\alpha_k$.  Note that for $1<i<k$, $I_{\alpha_i}$ is split half-way between the neighboring roots, i.e., $(\alpha_{i-1}+\alpha_{i+1})/2$.  Therefore,
\begin{align*}
I_{\alpha_i,L}=\begin{cases}\emptyset&i=1\\
\left[(\alpha_{i-1}+\alpha_i)/2,(\alpha_{i-1}+\alpha_{i+1})/2\right]&1<i<k\\
\left[(\alpha_{k-1}+\alpha_k)/2,2^L\right]&i=k
\end{cases}\\
I_{\alpha_i,R}=\begin{cases}\left[-2^L,(\alpha_1+\alpha_2)/2)\right]&i=1\\
\left[(\alpha_{i-1}+\alpha_{i+1})/2,(\alpha_i+\alpha_{i+1})/2\right]&1<i<k\\
\emptyset&i=k
\end{cases}.
\end{align*}
For a point in $I_{\alpha_i,L}$ the second nearest real root is $\alpha_{i-1}$, and the distance between $x$ and this root is $x-\alpha_{i-1}$.  For a point in $I_{\alpha_i,R}$, the second nearest real root is $\alpha_{i+1}$, and the distance between $x$ and this root is $\alpha_{i+1}-x$.  Therefore, we can rewrite the stopping function $F_{\Sturm}$ as
$$
F_{\Sturm}(x)=
\begin{cases}
x-\alpha_{i-1}&x\in I_{\alpha_i,L}\\
\alpha_{i+1}-x&x\in I_{\alpha_i,R}
\end{cases}.
$$

Let $Q_{\Sturm}$ be the partition of $I$ at the end of the Sturm bisection algorithm.  Using continuous amortization and the definition for the stopping function $F_{\Sturm}$, we can bound the size\footnote{Throughout the remainder of this paper, we will often be bounding pairs of integrals as below.  To make referencing easier, the equations are labeled with an ordered pair so that the integrals on the right-hand-side of the equality can be referenced easily and individually.} of this partition as follows:
\begin{subequations}\label{equation:SturmReal}
\begin{equation}
\#Q_{\Sturm}\leq \int_I{\frac{2dx}{F_{\Sturm}(x)}}=\sum_{i=1}^{k-1}\int_{I_{\alpha_i},R}{\frac{2dx}{\alpha_{i+1}-x}}+\sum_{i=2}^k\int_{I_{\alpha_i},L}{\frac{2dx}{x-\alpha_{i-1}}}\eqlabel
\end{equation}
\end{subequations}   
Next, we evaluate these integrals.  For $2\leq i\leq k-1$, Integral \ref{equation:SturmReal}a can be evaluated as
$$
\int_{I_{\alpha_i},R}{\frac{2dx}{\alpha_{i+1}-x}}=\int_{\frac{1}{2}(\alpha_{i-1}+\alpha_{i+1})}^{\frac{1}{2}(\alpha_i+\alpha_{i+1})}{\frac{2dx}{\alpha_{i+1}-x}}=-2\ln(\alpha_{i+1}-\alpha_i)+2\ln(\alpha_{i+1}-\alpha_{i-1}).
$$
Similarly, Integral \ref{equation:SturmReal}b evaluates to 
$$
\int_{I_{\alpha_i},L}{\frac{2dx}{x-\alpha_{i-1}}}=\int_{\frac{1}{2}(\alpha_{i-1}+\alpha_i)}^{\frac{1}{2}(\alpha_{i-1}+\alpha_{i+1})}{\frac{2dx}{x-\alpha_{i-1}}}=2\ln(\alpha_{i+1}-\alpha_{i-1})-2\ln(\alpha_i-\alpha_{i-1}).
$$
For $i=1$, 
$$
\int_{I_{\alpha_1},R}{\frac{2dx}{\alpha_2-x}}=\int_{-2^L}^{\frac{1}{2}(\alpha_1+\alpha_2)}{\frac{2dx}{\alpha_2-x}}=-2\ln(\alpha_2-\alpha_1)+2\ln(\alpha_2+2^L),
$$
and, for $i=k$,
$$
\int_{I_{\alpha_k},L}{\frac{2dx}{x-\alpha_{k-1}}}=\int_{\frac{1}{2}(\alpha_{k-1}+\alpha_k)}^{2^L}{\frac{2dx}{x-\alpha_{k-1}}}=2\ln(2^L-\alpha_{k-1})-2\ln(\alpha_k-\alpha_{k-1}).
$$
Combining these calculations gives us an upper bound on the size of the partition.  In other words,
$$
\#Q_{\Sturm}\leq 2\ln(\alpha_2+2^L)+2\ln(2^L-\alpha_{k-1})+\sum_{i=1}^{k-2}4\ln(\alpha_{i+2}-\alpha_i)-\sum_{i=1}^{k-1}4\ln(\alpha_{i+1}-\alpha_i).
$$
Since the maximum absolute value of the roots is $2^L$, each of the logarithms with a positive coefficient is bounded by $\ln(2^{L+1})=(L+1)\ln2$.  There are $4(k-1)$ terms of this form; therefore, the terms with positive coefficient are bounded above by $4\ln(2)(k-1)(L+1)$.  Since the real roots form a subset of all roots, and $k$ is the number of real roots, $k\leq d$ so the terms with positive coefficient are $O(dL)$.

The terms with negative coefficient can be bounded using the Mahler-Davenport bound of Lemma \ref{lemma:mahlerdavenport}.  We construct a graph on the real roots of $p$ by connecting roots in increasing order.  This graph has maximum degree 2 and the negative logarithm of the lengths of these edges are exactly what we must compute.  From Corollary \ref{corollary:MahlerDavenport}, we know that these terms are bounded by $O(d(L+\ln d))$.  Combining these bounds shows that the size of the subdivision tree for Sturm sequences is $O(d(L+\ln d))$, matching the bounds in \cite{davenport:85,du-sharma-yap:sturm:07}.
 
\subsection{Descartes' Rule of Signs}\label{sec:Descartes}
Descartes' rule of signs provides a less powerful, but still well-known method to isolate the real roots of a univariate polynomial.  Similar to Sturm sequences, Descartes' rule of signs is based on the number of sign changes in a sequence, but, for Descartes' rule of signs, the sequence of values consists of the coefficients of a polynomial.  In particular
\begin{lemma}[Descartes' rule of signs, see \cite{Krandick:2005:NBD:1113439.1113453}]
Let $p$ be a univariate polynomial.  The number of positive real roots (counted with multiplicities) is bounded above by the number of sign changes in the coefficients of $p$.  The difference between the number of sign changes and the actual number of roots is an even integer.
\end{lemma}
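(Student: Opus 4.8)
The plan is to handle the two assertions of the lemma — that $V(p)$, the number of sign changes in the coefficient sequence of $p$, dominates $Z_+(p)$, the number of positive real roots counted with multiplicity, and that $V(p)-Z_+(p)$ is even — by proving the parity statement first, independently, and then using it to sharpen a slightly-too-weak inequality coming from Rolle's theorem.

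\emph{Parity.} First I would reduce to the case $p(0)\neq 0$: write $p(x)=x^{j}\tilde p(x)$ with $\tilde p(0)\neq 0$, which changes neither $V$ nor $Z_+$. Two elementary observations then finish it. (i) For any finite real sequence with its zero terms deleted, the number of sign changes is odd precisely when the first and last terms have opposite signs; applied to the coefficient sequence of $\tilde p$, this says $V(p)=V(\tilde p)$ is odd exactly when $\operatorname{sign}\tilde p(0^{+})$ and $\operatorname{sign}\tilde p(+\infty)$ disagree, since the trailing coefficient controls the sign of $\tilde p$ just right of $0$ and the leading coefficient controls it near $+\infty$. (ii) A positive root of even multiplicity contributes $0$ and one of odd multiplicity contributes $1$ to $Z_+(p)\bmod 2$; and by the intermediate value theorem the number of positive roots of odd multiplicity — the sign-crossing points of $\tilde p$ on $(0,\infty)$ — is odd exactly when $\operatorname{sign}\tilde p(0^{+})\neq\operatorname{sign}\tilde p(+\infty)$. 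Comparing (i) and (ii) gives $V(p)\equiv Z_+(p)\pmod 2$.

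\emph{Inequality.} For $Z_+(p)\le V(p)$ I would induct on $\deg p$. Again reduce to $p(0)\neq 0$ (factoring out a power of $x$ only lowers the degree); if $p$ has no positive root the claim is trivial, so assume $Z_+(p)\ge 1$ and $\deg p\ge 1$. Rolle's theorem, applied across the distinct positive roots and combined with the fact that a root of $p$ of multiplicity $m$ is a root of $p'$ of multiplicity $m-1$, yields $Z_+(p')\ge Z_+(p)-1$. Comparing coefficient sequences, the sign pattern of $p'$ is that of $p$ with its constant term removed, and deleting the first term of a sequence drops the number of sign changes by $0$ or $1$; hence $V(p')\le V(p)$. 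With the inductive hypothesis $Z_+(p')\le V(p')$, this chains to $Z_+(p)\le Z_+(p')+1\le V(p')+1\le V(p)+1$.

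\emph{Combining, and the obstacle.} This last bound is off by one from what we want, but $Z_+(p)=V(p)+1$ is impossible by the parity statement already established; therefore $Z_+(p)\le V(p)$, completing the induction and the proof. I expect the off-by-one in the Rolle induction to be the one genuinely delicate point, and the resolution is precisely to have the parity result in hand beforehand so as to exclude the extremal case. An alternative route that sidesteps it is the classical lemma that multiplying by $(x-r)$ with $r>0$ strictly increases the number of sign changes: writing $p=x^{j}(x-r_{1})\cdots(x-r_{m})g$ with $g$ having no positive root, one obtains $V(p)\ge V(g)+m\ge m=Z_+(p)$ directly. That lemma, however, demands its own careful bookkeeping on coefficient signs, so it merely relocates the work rather than removing it.
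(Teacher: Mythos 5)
The paper does not actually prove this lemma; it states Descartes' rule of signs as a classical fact and cites \cite{Krandick:2005:NBD:1113439.1113453} for it, so there is no in-paper proof to compare against. On its own merits, your proposal is correct and is one of the two standard textbook routes. The parity argument is sound: reducing to $p(0)\neq 0$, observing that sign-change count in a finite sequence (zeros deleted) has the parity of ``first sign $\neq$ last sign,'' matching that against the parity of sign-crossings on $(0,\infty)$ via the intermediate value theorem, and noting that even-multiplicity roots contribute $0\bmod 2$. The Rolle-plus-derivative induction is also correct: $Z_+(p')\ge Z_+(p)-1$ from Rolle between distinct roots together with the multiplicity drop, $V(p')\le V(p)$ since the derivative's coefficient sign pattern is that of $p$ with the constant term deleted, and the inductive hypothesis then gives $Z_+(p)\le V(p)+1$. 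You correctly identify that this is off by one and that the parity result is exactly what closes the gap, since $Z_+(p)=V(p)+1$ would make $V(p)-Z_+(p)=-1$ odd. One minor point worth stating explicitly: the base case of the induction (constant nonzero $p$, where $Z_+=V=0$) should be named, but it is immediate. Your alternate route via the lemma that multiplication by $(x-r)$ with $r>0$ strictly increases sign changes is the other classical proof; it delivers the inequality directly and the parity as a byproduct, but as you say it trades the off-by-one headache for coefficient-sign bookkeeping, so neither is uniformly simpler.
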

Even though Descartes' rule of signs is formulated for positive roots, there are M\"obius transformations, such as $\frac{ax+b}{x+1}$, which transform the positive real axis into the interval $(a,b)$.  Therefore, it makes sense to talk about applying Descartes' rule of signs to an interval.  In this section, when we discuss the {\em variation of the coefficients of a polynomial} $p$ on an interval $(a,b)$, we mean the variation in the coefficients of $(x+1)^d\cdot p\left(\frac{ax+b}{x+1}\right)$.

Descartes' rule of signs can be used to create inclusion and exclusion predicates in the following way.  An interval $J$ is excluded if there is no variation in the coefficients of $p$ on $J$; in this case, $J$ contains no roots.  Similarly, an interval $J$ is included if the variation in the coefficients of $p$ on $J$ is one; in this case, $J$ contains exactly one root.  Therefore, our predicate $B_{\Descartes}$ for Descartes' rule of signs is based on the variation of the coefficients of $p$ on $J$.  In particular
$$
B_{\Descartes}(J)=\begin{cases}\true&\text{Coefficient variation on $J$ = 0 or 1}\\
\false&\text{Coefficient variation on $J$ $\geq$ 2}\end{cases}.
$$

We next derive a stopping function for the Descartes predicate.  The stopping function is based on two theorems: the One- and Two-circle Theorems \cite{Ostrowski1950,Krandick:2005:NBD:1113439.1113453,eigenwillig-sharma-yap:descartes:06}.  These two theorems give conditions on when an interval is terminal under Descartes' rule of signs.  In particular
\begin{lemma}[One-circle Theorem \cite{Ostrowski1950,Krandick:2005:NBD:1113439.1113453,eigenwillig-sharma-yap:descartes:06}]\label{lemma:onecircle}
Let $p$ be a univariate polynomial and $J$ a real interval.  If there are no roots in the open circle with diameter $J$, then the variation in the coefficients of $p$ on $J$ is zero.  Therefore, $B_{\Descartes}(J)=\true$.
\end{lemma}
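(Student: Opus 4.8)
\emph{Proof proposal.} The plan is to transport the entire question back along the M\"obius transformation $M(x)=\frac{ax+b}{x+1}$ and then invoke the classical fact that a real polynomial all of whose roots lie in the closed left half-plane has no sign variation among its coefficients.

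First I would set $q(x):=(x+1)^d\,p\!\left(\frac{ax+b}{x+1}\right)$, so that by definition the quantity to be shown to vanish is the number of sign changes in the coefficient sequence of $q$. Writing $p(y)=c\prod_{i=1}^d(y-\gamma_i)$ and clearing denominators yields the factorization $q(x)=c\prod_{i=1}^d\bigl((a-\gamma_i)x+(b-\gamma_i)\bigr)$; hence $\deg q\le d$ and the roots of $q$ are exactly the points $x_i=M^{-1}(\gamma_i)=\frac{\gamma_i-b}{a-\gamma_i}$ over those $i$ with $\gamma_i\ne a$ (a coincidence $\gamma_i=a$ merely drops the degree of $q$, which is harmless since $a$ lies on, not inside, the circle of diameter $J$).

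The geometric core is to identify $M\bigl(\{\real x>0\}\bigr)$ with the open disk $D_J$ of diameter $J$. Because $a,b$ are real, $M$ has real coefficients and so maps $\mathbb{R}\cup\{\infty\}$ onto itself; moreover $M(\infty)=a$, $M(0)=b$, and $M(1)=\tfrac{a+b}{2}$. Being a M\"obius map, $M$ is conformal and sends circles to circles, so it carries the imaginary axis --- the circle through $0$ and $\infty$ meeting the real line orthogonally --- to the circle through $a$ and $b$ meeting the real line orthogonally, which is exactly $\partial D_J$; since $1$ lies in the right half-plane and $M(1)$ is the midpoint of $J$, the component $\{\real x>0\}$ is mapped onto $D_J$. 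Consequently the hypothesis ``$p$ has no root in $D_J$'' says precisely that no $\gamma_i$ lies in $M(\{\real x>0\})$, i.e.\ every root $x_i$ of $q$ satisfies $\real x_i\le 0$.

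It then remains to prove the half-plane lemma: a nonzero real polynomial $q$ all of whose roots lie in $\{\real x\le 0\}$ has all its coefficients of a single sign, hence zero variation. I would factor $q$ over $\mathbb{R}$ as its leading coefficient times real linear factors $(x+r)$ with $r\ge0$ (from the real roots $-r$) and irreducible real quadratic factors $(x^2+\beta x+\gamma)$ with $\beta\ge 0$, $\gamma>0$ (from conjugate pairs $u\pm iv$, for which $\beta=-2u\ge0$ and $\gamma=u^2+v^2>0$): each such factor has nonnegative coefficients, a product of polynomials with nonnegative coefficients again has nonnegative coefficients, and therefore after dividing by the sign of the leading coefficient the coefficient sequence of $q$ is nonnegative and has no sign change. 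Since zero variation is one of the cases in which $B_{\Descartes}$ returns $\true$, this gives $B_{\Descartes}(J)=\true$. The step I expect to require the most care is the disk/half-plane correspondence: one must pin down the orientation (which of the two half-planes maps inside) and note that roots of $p$ on $\partial D_J$ --- including at the endpoints of $J$ --- as well as a possible drop in $\deg q$ all correspond to purely imaginary or infinite roots of $q$, which are already compatible with the closed-half-plane statement, so no separate boundary case analysis is needed.
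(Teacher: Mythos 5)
The paper does not prove this lemma; it cites it to the classical literature (Ostrowski, Krandick--Mehlhorn, Eigenwillig--Sharma--Yap), so there is no in-paper proof to compare against. Your argument is the standard proof and it is correct. The factorization $q(x)=c\prod_i\bigl((a-\gamma_i)x+(b-\gamma_i)\bigr)$ is right, the conformal identification of the disk of diameter $J$ with the image of the open right half-plane under $M(x)=\frac{ax+b}{x+1}$ is verified soundly (three-point check plus orthogonality to $\mathbb{R}$ plus $M(1)$ landing inside), and the half-plane lemma is proved by the usual factorization into real linear factors $(x+r)$, $r\ge0$, and real quadratics $x^2+\beta x+\gamma$ with $\beta\ge0,\gamma>0$, whose product over a nonnegative leading coefficient has no sign change. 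You also correctly handle the two boundary cases that a careless proof would miss: roots on $\partial D_J$ correspond to purely imaginary roots of $q$ (allowed in the closed half-plane), and $\gamma_i=a$ corresponds to a degree drop in $q$, which only removes factors and cannot create a sign change. One small point worth stating explicitly: since you use complex conjugate pairing, the argument tacitly uses that $p$ has real coefficients (true in this paper's setting), which is what forces $q$ to be a real polynomial and the non-real $x_i$ to pair up.
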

After the above M\"obius transformation, the One-circle Theorem corresponds to the case where there are no roots with positive real part.
\begin{lemma}[Two-circle Theorem \cite{Ostrowski1950,Krandick:2005:NBD:1113439.1113453,eigenwillig-sharma-yap:descartes:06}]\label{lemma:twocircle}
Let $p$ be a univariate polynomial and $J$ a real interval.  If there is exactly one root in the two circles which circumscribe equilateral triangles with one side $J$, then the variation in the coefficients of $p$ on $J$ is one.  Therefore, $B_{\Descartes}(J)=\true$.
\end{lemma}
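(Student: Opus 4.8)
The plan is to sandwich the variation of $p$ on $J$ between $1$ and $1$, using the One-circle Theorem (Lemma~\ref{lemma:onecircle}) as the $n=0$ case of the general Obreshkoff framework together with an elementary conjugation-symmetry observation. First I would place the hypothesis in standard terminology: the union of the two disks circumscribing the equilateral triangles with side $J$ is the \emph{Obreshkoff area} $A_1(J)$, the $n=1$ member of a nested family $A_0(J)\subseteq A_1(J)\subseteq\cdots$ of which $A_0(J)$ is exactly the open disk appearing in Lemma~\ref{lemma:onecircle} (an inscribed angle of $\pi/3$ forces each of the two disks to have radius $w(J)/\sqrt3$, which is why equilateral triangles appear). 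The two facts about $A_1(J)$ that the argument needs are both elementary: it is symmetric under complex conjugation, and $A_1(J)\cap\mathbb{R}$ is exactly the open segment spanned by the endpoints of $J$ (a line meets a circle in at most two points).

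The upper bound is the substantial input. I would invoke the classical Obreshkoff inequality, in the formulations of \cite{eigenwillig-sharma-yap:descartes:06,Krandick:2005:NBD:1113439.1113453}: if $A_n(J)$ contains at most $n$ roots of $p$, then the variation of $p$ on $J$ is at most $n$. Lemma~\ref{lemma:onecircle} is the case $n=0$; applying the case $n=1$ to the hypothesis (exactly one root in $A_1(J)$, hence at most one) gives that the variation of $p$ on $J$ is at most $1$. I would cite this inequality rather than reprove it: its proof is the genuinely classical part of the Descartes theory (Obreshkoff; see the references in the statement), and it is also the main obstacle to a short self-contained argument, since one cannot simply divide out the offending root (sign variation being neither sub- nor super-additive under multiplication).

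For the matching lower bound I would note that, because the coefficients of $p$ are real (as is required for the very notion of coefficient variation), the non-real roots of $p$ occur in conjugate pairs; since $A_1(J)$ is conjugation-symmetric, a non-real root in $A_1(J)$ would drag its conjugate into $A_1(J)$ as well, contradicting that there is exactly one. Hence the unique root in $A_1(J)$ is real, and, as $A_1(J)\cap\mathbb{R}$ is the open segment spanned by the endpoints of $J$, it lies in $J$. Descartes' rule of signs applied to the interval $J$ (via the M\"obius transformation described above) then gives that the variation of $p$ on $J$ is at least $1$.

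Combining the two bounds, the variation of $p$ on $J$ is exactly $1$, so $B_{\Descartes}(J)=\true$, as claimed. (Alternatively, once the root is known to be real it lies in the Obreshkoff lens $L_1(J)$, and the ``lens'' half of the Obreshkoff inequality yields the lower bound directly; but the route through conjugation symmetry and the elementary interval form of Descartes' rule is shorter.)
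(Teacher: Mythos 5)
Your proof is correct. The paper states this lemma as a black-box citation (to Ostrowski, Krandick, and Eigenwillig--Sharma--Yap) with no proof of its own, so there is no competing route in the paper to compare against; what you have written is a faithful reconstruction of how the cited sources establish the result. The substantial input, Obreshkoff's area bound (at most $n$ roots in $A_n(J)$ forces a coefficient variation of at most $n$), is correctly isolated and applied at $n=1$ to get the upper bound, and you are right to cite it rather than attempt a short self-contained derivation, since that bound is the genuinely nontrivial part of the classical Descartes theory. The lower bound via conjugation symmetry is clean and correct: because $A_1(J)$ is symmetric about the real axis and $p$ has real coefficients, a non-real root in $A_1(J)$ would drag its conjugate in, so the unique root is real; it lies in $A_1(J)\cap\mathbb{R}$, which is the open interval spanned by $J$, and the elementary direction of Descartes' rule then gives a variation of at least $1$. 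The geometry is also right: each circumscribing circle has radius $w(J)/\sqrt{3}$ and center at $m(J)\pm i\,w(J)/(2\sqrt{3})$, and $A_0(J)\subseteq A_1(J)$ recovers Lemma~\ref{lemma:onecircle} as the $n=0$ case of the same bound.
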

After the M\"obius transformation from above, the Two-circle Theorem corresponds to the case where there is exactly one root in the portion of the plane subtended by an arc of $\frac{4}{3}\pi$.  Note also that the diameter of the region covered by the two circles is $\sqrt{3}\cdot w(J)$.

We can use these two lemmas to develop a stopping function for Descartes' rule of signs.  Let the roots of $p$ be $\{\alpha_1,\cdots,\alpha_d\}$; w.l.o.g., reorder these roots so that $\alpha_1,\cdots,\alpha_k$ are the real roots of $p$ while $\alpha_{k+1},\cdots,\alpha_d$ are the complex roots of $p$.  Let $x$ be any point in the benchmark interval $I=[-2^L,2^L]$; define $\dist(x,\{\alpha_i\})$ to be the distance from $x$ to the nearest root of $p$, and $\dist_2(x,\{\alpha_i\})$ to be the distance from $x$ to the second closest root of $p$.  (Note that the definitions in this section differ from the definitions in the Sturm sequence argument because they are not restricted to real roots.)  Let $J$ be an interval containing $x$; if $w(J)<\dist(x,\{\alpha_i\})$, then the open disk with diameter $J$ is contained within the open disk centered at $x$ with radius $\dist(x,\{\alpha_i\})$.  By construction, the open disk centered at $x$ with radius $\dist(x,\{\alpha_i\})$ contains no roots of $p$; therefore, the open disk with diameter $J$ also does not contain any roots of $p$.  Therefore, by the One-circle Theorem, Lemma \ref{lemma:onecircle}, $J$ is terminal, and $F_{\Descartes,1}(x)=\dist(x,\{\alpha_i\})$ is a stopping function for Descartes' rule of signs.

Suppose now that $J$ is an interval containing $x$ such that $w(J)<\frac{1}{\sqrt{3}}\dist_2(x,\{\alpha_i\})$, then the open disk circumscribing the two disks in the Two-circle Theorem, Lemma \ref{lemma:twocircle}, is a disk of diameter less than $\dist_2(x,\{\alpha_i\})$.  Thus this disk contains at most one root; moreover, both the open disk of the One-circle Theorem and the two open disks of the Two-circle Theorem are contained in the disk centered at $x$ of radius $\dist_2(x,\{\alpha_i\})$.  It follows that either the open disk of the One-circle Theorem contains no roots or the two open disks of the Two-circle Theorem contain at most one root.  Thus by either Lemmas \ref{lemma:onecircle} or \ref{lemma:twocircle}, $J$ is terminal.  Therefore, $F_{\Descartes,2}(x)=\frac{1}{\sqrt{3}}\dist_2(x,\{\alpha_i\})$ is a stopping function for Descartes' rule of signs.

We combine the two stopping functions $F_{\Descartes,1}$ and $F_{\Descartes,2}$ into a single stopping function for Descartes' rule of signs; since both functions are stopping functions, we are free to choose the most appropriate stopping function at each point.  Near real roots, $F_{\Descartes,1}$ tends to infinity, so we use $F_{\Descartes,2}$ near real roots of $p$ and $F_{\Descartes,1}$ at all other points.  In particular, for each root $\alpha_i$ of $p$, define $I_{\alpha_i}$ to be the set of points of $I$ which are closest to $\alpha_i$.  For each real root $\alpha_i$ with $1\leq i\leq k$, define $d_i$ to be the distance from $\alpha_i$ to the nearest distinct root of $p$, and let $J_{\alpha_i}=\left[\alpha_i-\frac{1}{1+\sqrt{3}}d_i,\alpha_i+\frac{1}{1+\sqrt{3}}d_i\right]$.  (The reason for the coefficient of $d_i$ will be indicated later in the computation.)  Let $J$ be the union of the $J_{\alpha_i}$'s; then $J$ is the region where the stopping function $F_{\Descartes,2}$ is used.  In other words, the stopping function for Descartes' rule of signs is
$$
F_{\Descartes}(x)=\begin{cases}F_{\Descartes,1}(x)=\dist(x,\{\alpha_i\})&x\in I\setminus J\\
F_{\Descartes,2}(x)=\dist_2(x,\{\alpha_i\})&x\in J\end{cases}.
$$

Now, we are ready to apply continuous amortization to compute the complexity of the subdivision tree for Descartes' rule of signs.  Let $Q_{\Descartes}$ be the partition of $I$ at the end of the Descartes' rule of signs bisection algorithm.  Using continuous amortization on the stopping function $F_{\Descartes}$ results in the following inequalities:
\begin{subequations}\label{equation:descartes}
\begin{equation}
\#Q_{\Descartes}\leq\int_I{\frac{2dx}{F_{\Descartes}}}\leq \int_{I\setminus J}{\frac{2dx}{F_{\Descartes,1}(x)}}+\int_{J}{\frac{2dx}{F_{\Descartes,2}(x)}}.\eqlabel
\end{equation}
\end{subequations}

We next evaluate this integral for each root beginning with the real roots.  Let $\alpha_j$ be a real root (by assumption, $1\leq j\leq k$) and let $I_{\alpha_j}=[a_j,b_j]$.  The portion of Inequality \ref{equation:descartes} over this interval is 
\begin{subequations}\label{equation:descartes:real}
\begin{equation}
\int_{I_{\alpha_j}\setminus J_{\alpha_j}}{\frac{2dx}{F_{\Descartes,1}(x)}}+\int_{J_{\alpha_j}}{\frac{2dx}{F_{\Descartes,2}(x)}}.\eqlabel
\end{equation}
\end{subequations}
Since these integrals are taken over $I_{\alpha_j}$, the closest root to any point $x$ is $\alpha_j$, which is also real.  Therefore, Integral \ref{equation:descartes:real}a can be evaluated as
\begin{align*}
\int_{I_{\alpha_j}\setminus J_{\alpha_j}}{\frac{2dx}{F_{\Descartes,1}(x)}}&=\int_{a_j}^{\alpha_j-\frac{1}{1+\sqrt{3}}d_j}{\frac{2dx}{\alpha_j-x}}+\int_{\alpha_j+\frac{1}{1+\sqrt{3}}d_j}^{b_j}{\frac{2dx}{x-\alpha_j}}\\
&=2\ln(\alpha_j-a_j)+2\ln(b_j-\alpha_j)-4\ln\left(\frac{1}{1+\sqrt{3}}d_j\right).
\end{align*}
Since the maximum magnitude of all roots and points in the interval $I$ is $2^L$, each of the logarithms with a positive coefficient is bounded by $\ln(2^{L+1})=(L+1)\ln 2$.  There are $2k$ terms of this form; therefore, the terms with positive coefficient are bounded  above by $4\ln(2)k(L+1)$.  Since the real roots form a subset of all roots, and $k$ is the number of real roots, $k\leq d$, the terms with positive coefficient are $O(dL)$.  In addition, separating out the $(1+\sqrt{3})$ part of the third logarithm, we have $4\ln(1+\sqrt{3})$, which occurs $k$ times.  Since $k\leq d$, this sum is $O(d)$.  Finally, we are left with a term of the form $-4\ln(d_j)$, which is summed over all $k$ real roots.  Since this is a sum of the negative of logarithms of the distances between roots, the Mahler-Davenport bound of Lemma \ref{lemma:mahlerdavenport} applies.  In particular, we can construct a graph by connecting each real root with its nearest neighbor; in this case, the maximum valence of a vertex is three (each of the neighbors contribute one as well as the root itself).  Then, by Corollary \ref{corollary:MahlerDavenport}, we know that this sum is $O(d(L+\ln d))$.
 
In Integral \ref{equation:descartes:real}b, let $\alpha_k$ be the second closest root to $x$.  Then 
\begin{multline}\label{eq:Descartes}
F_{\Descartes,2}(x)=\frac{1}{\sqrt{3}}\dist_2(x,\{\alpha_i\})=\frac{1}{\sqrt{3}}|\alpha_k-x|\\
\geq\frac{1}{\sqrt{3}}\left(|\alpha_j-\alpha_k|-|\alpha_j-x|\right)\geq \frac{1}{\sqrt{3}}\left(d_j-|\alpha_j-x|\right).
\end{multline}
By working backwards from this right-hand side of the above inequality, one can reach our choice of $\frac{1}{1+\sqrt{3}}$ above.  Using this bound,
\begin{multline*}
\int_{\alpha_j-\frac{1}{1+\sqrt{3}}d_j}^{\alpha_j+\frac{1}{1+\sqrt{3}}d_j}{\frac{2dx}{F_{\Descartes,2}(x)}}\leq\int_{\alpha_j-\frac{1}{1+\sqrt{3}}d_j}^{\alpha_j+\frac{1}{1+\sqrt{3}}d_j}{\frac{2\sqrt{3}\cdot dx}{d_j-|\alpha_j-x|}}\\
=4\sqrt{3}\ln(d_j)-4\sqrt{3}\ln\left(\frac{\sqrt{3}}{1+\sqrt{3}}d_j\right)=-4\sqrt{3}\ln\left(\frac{\sqrt{3}}{1+\sqrt{3}}\right).
\end{multline*}
This term occurs for each of the $k$ roots and since $k\leq d$, this sum is $O(d)$.  Combining all of these bounds, we have that the real part is $O(d(L+\ln d))$.

The last step of the computation is to compute the integral for complex roots.  Let $\alpha_j$ be a complex root with positive imaginary part (by assumption $k<j\leq d$).  We need only consider complex roots with positive imaginary part because $I_{\alpha_j}=I_{\overline{\alpha_j}}$ and it is only necessary to integrate over each interval once.  As above, let $I_{\alpha_j}=[a_j,b_j]$.  Then we bound Inequality \ref{equation:descartes} (using the fact that $\arcsinh$ is an odd function):
\begin{multline*}
\int_{I_{\alpha_j}}{\frac{2dx}{F_{\Descartes,1}(x)}}=\int_{a_j}^{b_j}\frac{2dx}{|\alpha_j-x|}=2\cdot\arcsinh\left(\frac{b_j-\real(\alpha_j)}{\imag(\alpha_j)}\right)+2\cdot\arcsinh\left(\frac{\real(\alpha_j)-a_j}{\imag(\alpha_j)}\right)\\
=2\ln(b_j-\real(\alpha_j)+|b_j-\alpha_j|)+2\ln(\real(\alpha_j)-a_j+|a_j-\alpha_j|)-4\ln(\imag(\alpha_j)).
\end{multline*}
Since the maximum magnitude of all roots and points in the interval $I$ is $2^L$, each of the logarithms with a positive coefficient is bounded by $\ln(2^{L+2})=(L+2)\ln 2$.  There are $d-k$ terms of this form; therefore, the terms with positive coefficient are bounded above by $2\ln(2)(d-k)(L+1)$.  Since $k$ is a nonnegative integer, $d-k\leq d$, and the terms with positive coefficient are $O(dL)$.  The remaining terms with negative coefficient form a sum of the negative of logarithms of the distances between roots since $\ln(\imag(\alpha_j))=\ln((\alpha_j-\overline{\alpha_j})/2)$.  Separating out the division by 2 gives that these terms are bounded by $-4\ln(\alpha_j-\overline{\alpha_j})+4\ln2$.  Since there are $(d-k)/2$ terms of this form, the terms with $\ln2$ sum to $2(d-k)\ln2$, and, since $(d-k)/2\leq d$, this sum is $O(d)$.  This leaves the sum of the negative of the logarithms of the distances between conjugate pairs of roots, so the Mahler-Davenport bound of Lemma \ref{lemma:mahlerdavenport} applies.  In particular, we can construct a graph by connecting each complex root to its conjugate pair.  In this graph, the maximum valence is 1, therefore, by Corollary \ref{corollary:MahlerDavenport}, this sum is $O(d(L+\ln d))$.  Combining all of these bounds shows that the size of the subdivision tree is $O(d(L+\ln d))$ matching the bound of \cite{eigenwillig-sharma-yap:descartes:06}.

\section{Complex Root Isolation}\label{sec:complex}
In this section, we extend the continuous amortization technique to bisection-based techniques for complex root isolation.  In particular, we present a generalization of continuous amortization to arbitrary dimensions, and we use the extension to compute the size of the subdivision tree for Sturm sequences \cite{Pinkert:1976:EMF:355705.355710,Wilf:1978:GBA:322077.322084,CamargoBrunetto200095,du-sharma-yap:sturm:07} and a square-free version of \texttt{CEVAL} \cite{SagraloffYap:8Point}.  For the purposes of this section, we often view $\mathbb{C}$ as $\mathbb{R}^2$.  Our complexity bounds match or surpass the best bounds in the literature.

\subsection{Generalizing Continuous Amortization to Higher Dimensions}\label{sec:CAHigher}
In this section, we consider bisection-based algorithms on $\mathbb{R}^n$ for a fixed dimension $n$.  Our basic object of interest is a hypercube of dimension $n$, and to bisect it means to divide it in half along each dimension, which results in $2^n$ smaller hypercubes whose $n$-dimensional volume is $2^{-n}$ times the $n$-dimensional volume of the original hypercube.  For $n$-dimensional problems, the predicate $B$ is a Boolean function on hypercubes.  In this setup, Algorithm \ref{algorithm:general} applies as written, while using the higher-dimensional version of bisection.  In this case, we call a nonnegative function $F:\mathbb{R}^n\rightarrow\mathbb{R}$ a stopping function if it has the following property: for any pair $(x,J)$ where $x$ is a point in the hypercube $J$, if the $n$-dimensional volume of $J$ is less than $F(x)$, then $B(J)$ is true.  With this setup, we extend continuous amortization to compute the size of the partition formed from Algorithm \ref{algorithm:general} in this higher-dimensional situation.

\begin{proposition}\label{prop:ndimensional}
Let $F$ be a stopping function for Algorithm \ref{algorithm:general} in $n$-dimensional space, and let $Q(R)$ be the final partition produced by the algorithm when applied to the $n$-dimensional hypercube $R$.  Then,
$$
\#Q(R)\leq\max\left\{1,\int_R{\frac{2^ndV}{F(x)}}\right\},
$$
where $dV$ is the $n$-dimensional volume form.  If the algorithm does not terminate, then the integral is infinite.
\end{proposition}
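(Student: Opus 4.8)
The plan is to mimic the one-dimensional argument of Lemma \ref{lemma:continuousamortization}, tracking $n$-dimensional volume in place of interval width. First I would observe that the algorithm produces a subdivision tree whose nodes are hypercubes: the root is $R$, and each non-terminal node has $2^n$ children obtained by bisection, each of volume $2^{-n}$ times the parent's. The leaves of this tree are exactly the hypercubes of $Q(R)$. If the tree is just the root (i.e.\ $B(R)$ is already true), then $\#Q(R)=1$ and the bound holds trivially, so assume otherwise.

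The key step is to charge each leaf against a portion of the integral $\int_R 2^n\,dV/F(x)$. Let $J$ be a leaf of the subdivision tree that is not the root; then its parent $J'$ is non-terminal, so $B(J')$ is false. By the contrapositive of the stopping-function property, for every point $x\in J'$ we have $\operatorname{Vol}(J')\ge F(x)$, hence in particular $\operatorname{Vol}(J')\ge F(x)$ for all $x$ in the sub-hypercube $J\subseteq J'$. Since $\operatorname{Vol}(J)=2^{-n}\operatorname{Vol}(J')$, this gives $F(x)\le 2^n\operatorname{Vol}(J)$ for all $x\in J$, and therefore
\[
\int_J \frac{2^n\,dV}{F(x)} \;\ge\; \int_J \frac{2^n\,dV}{2^n\operatorname{Vol}(J)} \;=\; \frac{\operatorname{Vol}(J)}{\operatorname{Vol}(J)} \;=\; 1.
\]
(Here $F$ is nonnegative; if $F(x)=0$ somewhere in $J$ the integrand is $+\infty$ and the inequality still holds, and this is also the mechanism by which non-termination forces the integral to be infinite.) Summing over all leaves $J$ of $Q(R)$ — which form a partition of $R$ up to a measure-zero set of shared boundary faces — yields
\[
\int_R \frac{2^n\,dV}{F(x)} \;=\; \sum_{J\in Q(R)} \int_J \frac{2^n\,dV}{F(x)} \;\ge\; \sum_{J\in Q(R)} 1 \;=\; \#Q(R),
\]
which is the desired bound once combined with the trivial case $\#Q(R)=1$.

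I expect the main obstacle to be purely bookkeeping rather than conceptual: one must be careful that the leaves genuinely tile $R$ (so that the integral splits as a sum over leaves, the overlaps being lower-dimensional and hence of $n$-volume zero), and one must handle the edge case where the root $R$ is itself terminal, since then no leaf has a non-terminal parent and the charging argument above does not apply — this is exactly what the $\max\{1,\cdot\}$ absorbs. The subtlety about $F$ vanishing on a set of positive measure (forcing the integral to diverge, consistent with the "does not terminate" clause) should be noted but requires no extra work. Everything else is the one-dimensional proof with $w(\cdot)$ replaced by $\operatorname{Vol}(\cdot)$ and the factor $2$ replaced by $2^n$, reflecting that each bisection shrinks volume by a factor $2^n$ instead of width by a factor $2$.
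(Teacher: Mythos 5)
Your argument is correct and is essentially the paper's own proof: both handle the $\#Q(R)=1$ case trivially, both charge each leaf $J$ at least $1$ against $\int_J 2^n\,dV/F$ by using the non-termination of the parent to get $F(x)\le 2^n\operatorname{Vol}_n(J)$ on $J$, and both sum over leaves. The only cosmetic difference is that you substitute the pointwise bound directly into the integrand, whereas the paper routes it through a maximizer $z$ of $F$ on $J$; the content is the same.
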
 
\begin{proof}
In this proof, we follow the proof as in \cite{BurrKrahmer:SqFreeEVAL}.  If $\#Q(R)=1$, then the bound is immediate.  If $\#Q(R)>1$, then let $S$ be any interval in the partition $Q(R)$.  There is a lower bound on the $n$-dimensional volume of $S$ because Algorithm \ref{algorithm:general} did not terminate on the parent of $S$.  Therefore, 
$$
\forall y\in S, \Vol_n(S)\geq\frac{1}{2^n}F(y),
$$
where $\Vol_n(S)$ is the $n$-dimensional volume of $S$.  On the other hand, since
$$
\int_R{\frac{2^ndV}{F(x)}}=\sum_{S\in Q(R)}\int_S{\frac{2^ndV}{F(x)}},
$$
it suffices to show that the value on each terminal hypercube is at least one.  For a fixed $S$, let $z\in S$ such that $F(z)$ is maximal in $S$, then
$$
\int_S{\frac{2^ndV}{F(x)}}\geq\int_S{\frac{2^ndV}{F(z)}}\geq\frac{2^n}{F(z)}\cdot \Vol_n(S)\geq \frac{2^n}{F(z)}\cdot\frac{F(z)}{2^n}=1.
$$
If Algorithm \ref{algorithm:general} does not terminate, then the integral above is a bound for the size of the partition at any point in time, but since the size of the partition can be arbitrarily large, the integral is unbounded.
\end{proof}

Proposition \ref{prop:ndimensional} is of theoretical and practical interest, and, in this paper, we apply it in the two-dimensional case for $\mathbb{C}\simeq\mathbb{R}^2$.  Here, $n=2$ and $F(x)$ gives a bound on the area of a subregion.  The following corollary explicitly expresses this special case:

\begin{corollary}
Let $F$ be a stopping function for Algorithm \ref{algorithm:general} in $\mathbb{C}$ (or $\mathbb{R}^2$), and let $Q(R)$ be the final partition produced by the algorithm when applied to the square $R$.  Then,
$$
\#Q(R)\leq\max\left\{1,\int_R{\frac{4dA}{F(x)}}\right\},
$$
where $dA$ is the area form.  If the algorithm does not terminate, then the integral is infinite.
\end{corollary}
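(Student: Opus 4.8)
The plan is to recognize this corollary as the $n = 2$ instance of Proposition \ref{prop:ndimensional}, so that the proof reduces to checking that the two-dimensional terminology of Section \ref{sec:continuousamortization} is literally the $n$-dimensional framework of Section \ref{sec:CAHigher} specialized to $n = 2$. First I would match the geometric primitives: a hypercube of dimension $2$ is exactly an axis-aligned square $S = I_1 \times I_2$ with $w(I_1) = w(I_2)$, and bisecting such a square --- splitting each defining interval at its midpoint --- produces precisely the $2^2 = 4$ children obtained by bisecting a $2$-dimensional hypercube, each with one quarter of the area of the parent. Hence running Algorithm \ref{algorithm:general} on a square $R$ with a Boolean predicate $B$ on squares is the same process as running it in the $n$-dimensional setting with $n = 2$.

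Next I would line up the notion of stopping function. In the $n$-dimensional framework, $F\colon\mathbb{R}^n\rightarrow\mathbb{R}$ is a stopping function if $\Vol_n(J) < F(x)$ forces $B(J)$ to be true whenever $x$ lies in the hypercube $J$; for $n = 2$ the $2$-dimensional volume $\Vol_2$ is the area $dA$, which is exactly the hypothesis imposed on $F$ in the statement of the corollary. With the hypotheses aligned, I would simply invoke Proposition \ref{prop:ndimensional} with $n = 2$: its conclusion $\#Q(R) \leq \max\{1, \int_R 2^n\, dV / F(x)\}$ becomes $\#Q(R) \leq \max\{1, \int_R 4\, dA / F(x)\}$ after substituting $2^n = 4$ and $dV = dA$, and the clause on non-termination transfers unchanged.

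There is essentially no obstacle; the only point requiring any care is confirming that the informal descriptions of ``square'' and ``bisect a square'' in Section \ref{sec:continuousamortization} genuinely coincide with the hypercube formalism of Section \ref{sec:CAHigher}, so that the proposition applies with no re-derivation. Should one prefer a self-contained argument, one can transcribe the proof of Proposition \ref{prop:ndimensional} with $n = 2$: if $\#Q(R) = 1$ the bound is immediate; otherwise each square $S$ in the partition $Q(R)$ satisfies $\area(S) \geq \tfrac14 F(y)$ for every $y \in S$, because Algorithm \ref{algorithm:general} did not terminate on the parent of $S$, and then, choosing $z \in S$ with $F(z)$ maximal on $S$, one has $\int_S 4\, dA / F(x) \geq (4/F(z))\,\area(S) \geq 1$; summing over $S \in Q(R)$ yields the claim, with the divergent-integral case handled exactly as in the proposition.
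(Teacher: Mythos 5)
Your proposal is correct and matches the paper's intent exactly: the corollary is stated immediately after Proposition \ref{prop:ndimensional} precisely as its $n=2$ specialization, with no separate proof supplied. Your careful check that ``square'' and ``bisect'' coincide with the $2$-dimensional hypercube setup, followed by the substitution $2^n = 4$ and $dV = dA$, is exactly the implicit argument, and your optional transcription of the proposition's proof is a faithful restatement.
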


In the next two sections, we apply this generalization of continuous amortization to algorithms for isolating the complex roots of a polynomial.

\subsection{Sturm Sequences}\label{sec:ComplexSturm}
In this section, we apply the continuous amortization technique to the Sturm-based algorithm for isolating the roots of a complex polynomial, as described in \cite{du-sharma-yap:sturm:07}.  In \cite{du-sharma-yap:sturm:07}, the authors discuss Sturm-based probes which determine the number of roots in half-open squares of the form $[a,b)\times(ci,di]$.  The correctness of this algorithm is based on the Routh-Hurwitz Theorem, forms of which are also contained in \cite{Routh,MR1416383,Pinkert:1976:EMF:355705.355710,Yap:Algebra,du-sharma-yap:sturm:07}

\begin{lemma}[Routh-Hurwitz Theorem \cite{Routh,MR1416383,Pinkert:1976:EMF:355705.355710,Yap:Algebra,du-sharma-yap:sturm:07}]  
Let $p$ be a complex valued polynomial of a real variable.  In other words, $p(x)=p_0(x)+p_1(x)i$ where $p_0(x)$ is the real part of $p$ and $p_1(x)i$ is the complex part of $p$.  Assume that $p_0$ and $p_1$ are relatively prime (in particular, neither is zero) and $\deg p_0\geq \deg p_1$.  Let $v$ be the variation on the real line of a generalized Sturm sequence whose first two polynomials are $p_1$ and $p_0$.  Then, the number of roots of $p$ lying above the real axis is $(\deg(p)-v)/2$.
\end{lemma}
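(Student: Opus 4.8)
The natural approach is the classical one: apply the argument principle to a large semicircular contour in the closed upper half-plane and match the contribution of its straight edge to the Sturm variation $v$ via the Cauchy-index form of Sturm's theorem. Regard $p$ as a genuine polynomial in a complex variable $z$ through $p(z)=p_0(z)+i\,p_1(z)$. Two preliminaries: since $p_0,p_1$ are coprime, $p$ has no zero on the real axis (a real $x_0$ with $p(x_0)=0$ would force $p_0(x_0)=p_1(x_0)=0$); and since $\deg p_0\ge\deg p_1$, the leading coefficient of $p$ is governed by $p_0$ (it is that of $p_0$, or that plus $i$ times that of $p_1$ when $\deg p_0=\deg p_1$), so $\deg p=\deg p_0=:n$. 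Fix $R$ larger than the modulus of every zero of $p$ and let $\Gamma_R$ be the positively oriented boundary of $\{|z|\le R,\ \imag z\ge 0\}$, i.e.\ the segment from $-R$ to $R$ followed by the arc $z=Re^{i\theta}$, $\theta:0\to\pi$. Since $p$ has no zeros on $\Gamma_R$, the argument principle gives $N_+=\frac{1}{2\pi}\,\Delta_{\Gamma_R}\arg p$, where $N_+$ is the number of zeros of $p$ above the real axis.

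Next I would evaluate the two pieces of $\Delta_{\Gamma_R}\arg p$ as $R\to\infty$. On the arc, $p(z)=c_n z^n(1+O(1/z))$ with $c_n\ne 0$, so $\arg p(Re^{i\theta})=\arg c_n+n\theta+O(1/R)$ uniformly in $\theta$, and the arc contributes $n\pi+O(1/R)\to n\pi$. On the segment, for real $x$ a continuous branch of $\arg p(x)$ differs from $\arctan\bigl(p_1(x)/p_0(x)\bigr)$ by a multiple of $\pi$ that changes only at the real zeros of $p_0$, by $\pm\pi$ according to the direction in which $p_1/p_0$ passes through $\infty$ there; because $\deg p_0\ge\deg p_1$ forces $p_1/p_0$ to have equal finite limits at $\pm\infty$, the net change of $\arg p$ over $\mathbb{R}$ equals exactly $-\pi\,I_{-\infty}^{\infty}(p_1/p_0)$, where $I_{-\infty}^{\infty}$ is the Cauchy index. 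Passing to the limit $R\to\infty$ and combining, $2\pi N_+=n\pi-\pi\,I_{-\infty}^{\infty}(p_1/p_0)$, hence $N_+=\tfrac12\bigl(n-I_{-\infty}^{\infty}(p_1/p_0)\bigr)$.

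Finally I would quote the Cauchy-index form of Sturm's theorem (see \cite{Yap:Algebra,du-sharma-yap:sturm:07}): the index $I_{-\infty}^{\infty}(p_1/p_0)$ is computed as the difference of sign-variations, at $-\infty$ and $+\infty$, of the generalized Sturm sequence produced from $p_0$ and $p_1$ by successively negating remainders, and this difference is precisely the variation $v$ of the statement. Substituting into the formula above yields $N_+=(\deg p-v)/2$, as claimed.

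The step I expect to be the main obstacle is the middle one: rigorously establishing $\Delta_{\mathbb{R}}\arg p=-\pi\,I_{-\infty}^{\infty}(p_1/p_0)$ while keeping every sign convention mutually consistent — the orientation of $\Gamma_R$, the sign in the definition of the Cauchy index, the direction in which the variation $v$ is counted, and which of $p_0,p_1$ heads the Sturm sequence — and handling the non-generic configurations, namely real zeros of $p_0$ of higher multiplicity and the case $\deg p_0=\deg p_1$, in which the endpoints of the planar curve $x\mapsto p(x)$ do not lie on the real axis (this is exactly what the hypothesis $\deg p_0\ge\deg p_1$ controls). The argument principle and Sturm's theorem themselves are standard and may simply be cited.
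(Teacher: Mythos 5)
The paper itself offers no proof of this lemma; it is stated as a standard result and supported only by citations (Routh, Pinkert, Yap, du--Sharma--Yap, etc.), so there is no in-paper argument to compare against. Your proposal is the classical route: the argument principle on a large upper half-disk, the arc contributing $n\pi$, the diameter contributing $-\pi\,I_{-\infty}^{\infty}(p_1/p_0)$, and the Cauchy-index form of Sturm's theorem converting that index into a sign-variation count. This is sound and is precisely the proof one finds in standard references, and you are right that the argument principle and Sturm's theorem can be cited rather than re-derived.

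One point deserves more than the ``bookkeeping'' caveat you flag, because it is a potential mismatch with the statement rather than just a sign convention to track. Your final step invokes the Cauchy-index form of Sturm's theorem for ``the generalized Sturm sequence produced from $p_0$ and $p_1$,'' which under the usual convention means the sequence beginning $(p_0,\,p_1,\,-\mathrm{rem}(p_0,p_1),\dots)$ and whose variation difference equals $I_{-\infty}^{\infty}(p_1/p_0)$ --- exactly what your contour computation produces. But the lemma as stated takes $v$ from a sequence whose \emph{first two} polynomials are $p_1$ and $p_0$, in that order. Under the standard convention that ordering computes $I_{-\infty}^{\infty}(p_0/p_1)$, not $I_{-\infty}^{\infty}(p_1/p_0)$, and these two indices are not equal in general (they are linked by an inversion identity whose correction term depends on the signs of $p_0 p_1$ at $\pm\infty$). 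So to close the argument you must either (a) verify from the cited sources that ``generalized Sturm sequence with first two polynomials $p_1,p_0$'' is meant in a convention where the resulting variation equals $I_{-\infty}^{\infty}(p_1/p_0)$ (which is plausible, since with $\deg p_1\le\deg p_0$ the pair $(p_1,p_0)$ is degree-increasing and a na\"ive remainder sequence would stall, forcing a nonstandard start), or (b) explicitly insert the Cauchy-index inversion identity and show that, under the hypothesis $\deg p_0\ge\deg p_1$, the correction vanishes so $I_{-\infty}^{\infty}(p_0/p_1)=I_{-\infty}^{\infty}(p_1/p_0)$ in the relevant regime. Until that reconciliation is spelled out, the chain from the contour integral to the $v$ of the statement has a genuine gap, not merely a sign to double-check.
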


This lemma can be applied to vertical or horizontal lines in the complex plane by considering $p(x+iy_0)$ or $p(x_0+iy)$ for the univariate polynomial $p$.  In this way, we count the number of roots to one side of an axis aligned line in the complex plane.  This technique can be extended to count the number of roots in a quadrant, which, in turn, can be used to count the number of roots in a square.  Note that the algorithm presented in \cite{du-sharma-yap:sturm:07} and suggested here requires many Sturm sequences to be computed because new sequences must be computed at every subdivision step.  These additional computations make the corresponding subdivision algorithm impractical, but the algorithm remains of theoretical interest.

The test described above can be used to count the number of roots in a square.  This test gives rise to inclusion and exclusion predicates in the following way.  A square $S$ is excluded if does not contain any roots, and $S$ is included if it contains exactly one root.  Therefore, our predicate $B_{\CSturm}$ for complex Sturm sequences is based on the number of roots in a square.
$$
B_{\CSturm}(S)=\begin{cases}\true&S\text{ contains 0 or 1 roots}\\
\false&S\text{ contains 2 or more roots}\end{cases}.
$$

We next derive a stopping function for the Sturm predicate.  Let the roots of $p$ be $\{\alpha_1,\cdots,\alpha_d\}$.  Following the example of Section \ref{sec:real:sturm}, we note that if a square $S$ containing $x$ also contains two roots $\alpha_{i_1}$ and $\alpha_{i_2}$ with $\alpha_{i_1}$ not further from $x$ than $\alpha_{i_2}$, i.e., $\dist(x,\alpha_{i_1})\leq\dist(x,\alpha_{i_2})$, then $\dist(x,\alpha_{i_2})=\dist_2(x,\{\alpha_i\})$.  Therefore, for $S$ to contain two roots, it is a necessary condition that $\diam(S)\geq\dist_2(x,\{\alpha_i\})$.  Thus, our stopping function $F_{\CSturm}$ restricts $S$ to have diameter less than $\dist_2(x,\{\alpha_i\})$.  Since the diameter of a square is $\sqrt{2}$ times its side length, a stopping function for the complex Sturm function is
$$
F_{\CSturm}=\left(\frac{1}{\sqrt{2}}\dist_2(x,\{\alpha_i\})\right)^2=\frac{1}{2}\dist_2(x,\{\alpha_i\})^2.
$$

In the case for real Sturm sequences, it is straightforward to determine the regions of integration based on the constructed stopping function, but for complex Sturm sequences, the regions of integration are much more complicated.  In particular, define the second-degree Voronoi cells as follows: the second-degree Voronoi cell for $\alpha_i$, denoted $R_{\alpha_i}$, is the set of points where $\alpha_i$ is the second closest root\footnote{The second-degree Voronoi cells are not the same as the second-order Voronoi cells \cite{MR1730176}, but are a union of cells in the common refinement of first- and second-order Voronoi cells.}.  In the region $R_{\alpha_i}$, the stopping function is given by $\dist(x,\alpha_i)/2$.  The regions $R_{\alpha_i}$ may be more complicated than standard Voronoi regions; in fact, they may not be convex and may be disconnected.

Consider the benchmark square $R=[-2^L,2^L]\times[-2^Li,2^Li]$, and let $Q_{\CSturm}$ be the partition of $R$ at the end of the complex Sturm bisection algorithm.  Using continuous amortization and the definition for the stopping function $F_{\CSturm}$, we can bound the size of the partition as follows:
\begin{equation}\label{equation:cSturm}
\#Q_{\texttt{CSturm}}(R)\leq \int_R{\frac{4dA}{\frac{1}{2}\dist_2(x,\{\alpha_i\})^2}}=\sum_{i=1}^d\int_{R_{\alpha_i}}{\frac{8dA}{|x-\alpha_i|^2}}.
\end{equation}

Since the regions $R_{\alpha_i}$ may be quite complicated, we construct larger regions which contain the $R_{\alpha_i}$, but are simpler to integrate over.  Define $d_i$ to be the distance from $\alpha_i$ to the nearest distinct root of $p$.  Then all of the points in the open disk of radius $d_i/2$ are closer to $\alpha_i$ than any other root.  Therefore, $R_{\alpha_i}$ cannot intersect this disk because for every point in $R_{\alpha_i}$, $\alpha_i$ is not the closest root.  In particular, let $D_{\alpha_i}$ consist of the annulus inside the closed disk of radius $\sqrt{2}\cdot 2^{L+1}$ centered at $\alpha_i$ and outside the open disk of radius $d_i/2$ centered at $\alpha_i$.  The outer radius is $\sqrt{2}\cdot 2^{L+1}$ because the diameter of $R$ is $\sqrt{2}\cdot 2^{L+1}$ and the disk of this radius contains all of $R$.  Since $R_{\alpha_i}$ is included in $R$ and $R_{\alpha_i}$ is excluded from the open disk of radius $d_i/2$ centered at $\alpha_i$, $R_{\alpha_i}\subseteq D_{\alpha_i}$.  Therefore, for each $i$, 
$$
\int_{R_{\alpha_i}}{\frac{8dA}{|x-\alpha_i|^2}}\leq\int_{D_{\alpha_i}}{\frac{8dA}{|x-\alpha_i|^2}}=\int_0^{2\pi}\int_{\frac{1}{2}d_i}^{\sqrt{2}\cdot 2^{L+1}}\frac{8r\,dr\,d\theta}{r^2},
$$
by converting to polar coordinates centered at $\alpha_i$.  Evaluating this integral, we have
$$
\int_{R_{\alpha_i}}{\frac{8dA}{|x-\alpha_i|^2}}\leq 16\pi\left(\ln\left(\sqrt{2}\cdot 2^{L+1}\right)-\ln\left(\frac{1}{2}d_i\right)\right).
$$
Substituting this inequality into Inequality \ref{equation:cSturm}, we have that 
\begin{equation}\label{equation:CSturmBound}
\#Q_{\texttt{CSturm}}(R)\leq\sum_{i=1}^d 16\pi\left(\ln\left(\sqrt{2}\cdot 2^{L+1}\right)-\ln\left(\frac{1}{2}d_i\right)\right).
\end{equation}
The terms whose logarithm has a positive coefficient are $O(L)$ and since there are $d$ roots in the sum, the part of the sum with positive coefficient is $O(dL)$.  In addition, the $\frac{1}{2}$ in the logarithm with negative coefficient can be separated out, and the terms of this type sum to $16\pi d\ln(2)$ which is $O(d)$.  The remainder can be bounded with the Mahler-Davenport bound of Lemma \ref{lemma:mahlerdavenport}.  In particular, we can construct a graph $G$ by connecting each root to one of its nearest neighbors.  This graph is an undirected multi-subgraph of the nearest neighbor graph \cite{EppsteinNearestNeighbor}.  It is a subgraph because each root is connected to only one of its nearest neighbors.  Since the maximum in-degree of the nearest neighbor graph is six \cite{EppsteinNearestNeighbor}, the maximum valence of $G$ is seven because a root can be connected to all six of its nearest neighbors; then there is one more nearest neighbor edge corresponding to the root itself.  Therefore, by Corollary \ref{corollary:MahlerDavenport}, the sum of the negative of logarithms of differences between roots in Inequality \ref{equation:CSturmBound} is $O(d(L+\ln d))$.  Combining all of these bounds shows that the size of the subdivision tree is $O(d(L+\ln d))$, matching the bounds in \cite{davenport:85,du-sharma-yap:sturm:07}.

\begin{remark}
In the computation above for complex Sturm sequences, we expanded the region of integration around each root of $p$, and, therefore, integrated over the same region many times.  In both of the real-root computations, such an approximation would have resulted in the same final complexity, but it seems more instructive to include the more precise calculation.
\end{remark}

\subsection{$\SqFreeCEVAL$}
In this section, we apply the continuous amortization technique to compute the size of the subdivision tree for the $\SqFreeCEVAL$ algorithm, which isolates the complex roots of a polynomial.  This algorithm is adapted from the 8-point test version of $\CEVAL$ \cite{SagraloffYap:8Point}, but it has one important distinction.  The $\CEVAL$ algorithm can be applied to any square-free polynomial, but in $\SqFreeCEVAL$, the additional condition is that $p'$ is assumed to also be square-free\footnote{The additional condition for $\SqFreeCEVAL$ arises because the application of the Mahler-Davenport bound requires that the roots of $pp'$ are distinct simple roots.  On the other hand, unlike in the case of \cite{BurrKrahmer:SqFreeEVAL} where this condition can be overcome, it is not immediately obvious how to apply the terminal conditions of $\CEVAL$ to the square-free component of $p'$.  In particular, there are counterexamples to the na\"ivest generalizations.}.  This restriction is mild, but does restrict the types of polynomials under consideration.

The inclusion and exclusion tests for $\SqFreeCEVAL$ are based on the following theorem
\begin{proposition}[\cite{SagraloffYap:8Point,SagraloffYap:CEVAL}]
Let $p$ be a square-free polynomial and $D$ a disk of radius $r$ centered at $m$.
\begin{enumerate}[(1)]
\item If $
\sum_{k=1}^d\left|\frac{p^{(k)}(m)}{k!\cdot p(m)}\right|r^k<1$, then $\overline{D}$ has no roots of $p$.
\item If $
\sum_{k=1}^{d-1}\left|\frac{p^{(k+1)}(m)}{k!\cdot p'(m)}\right|r^k<\frac{1}{\sqrt{2}}$, then $\overline{D}$ has at most one of $p$.\label{CEVAL:condition2}
\end{enumerate}
\end{proposition}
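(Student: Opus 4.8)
The plan is to prove both statements from Taylor's formula about the center $m$ together with the triangle inequality: part~(1) is the classical Pellet-type estimate, and part~(2) reduces to a quantitative, convexity-based version of Rolle's theorem for complex polynomials. Note first that in both parts the hypothesis is only meaningful when $p(m)\neq 0$ (resp.\ $p'(m)\neq 0$), so we may assume this; in particular $m$ is not itself a root in case~(1).

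For part~(1), I would expand $p(z)=\sum_{k=0}^{d}\frac{p^{(k)}(m)}{k!}(z-m)^k$ (an exact identity, since $\deg p=d$) and estimate, for any $z\in\overline D$ using $|z-m|\le r$,
\[
|p(z)-p(m)|\le\sum_{k=1}^{d}\left|\frac{p^{(k)}(m)}{k!}\right|r^k=|p(m)|\sum_{k=1}^{d}\left|\frac{p^{(k)}(m)}{k!\,p(m)}\right|r^k<|p(m)|.
\]
Hence $p(z)$ lies in the open disk of radius $|p(m)|$ about $p(m)$, which excludes $0$, so $p$ has no zero on $\overline D$.

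For part~(2), I would apply the same expansion to $p'$ (degree $d-1$, with $(p')^{(k)}=p^{(k+1)}$): the hypothesis gives, for all $z\in\overline D$,
\[
|p'(z)-p'(m)|\le|p'(m)|\sum_{k=1}^{d-1}\left|\frac{p^{(k+1)}(m)}{k!\,p'(m)}\right|r^k<\tfrac{1}{\sqrt2}\,|p'(m)|<|p'(m)|,
\]
so $p'$ stays in the open disk of radius $|p'(m)|$ centered at $p'(m)$, and therefore $\real\big(\overline{p'(m)}\,p'(z)\big)>0$ on all of $\overline D$. Now if $\overline D$ contained two distinct roots $\alpha_1,\alpha_2$ of $p$, then by convexity the segment $z(t)=\alpha_1+t(\alpha_2-\alpha_1)$, $t\in[0,1]$, lies in $\overline D$, and $p(\alpha_1)=p(\alpha_2)=0$ forces $\int_0^1 p'(z(t))\,dt=\frac{p(\alpha_2)-p(\alpha_1)}{\alpha_2-\alpha_1}=0$; multiplying by $\overline{p'(m)}$ and taking real parts then contradicts the positivity of the integrand. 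Hence $\overline D$ contains at most one root of $p$.

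The two triangle-inequality estimates are routine; the step that deserves attention is the passage in part~(2) from ``$p'$ avoids a half-plane bounded by a line through $0$ on the convex set $\overline D$'' to ``$p$ has at most one root in $\overline D$'' --- a Noshiro--Warschawski-type fact, made concrete here by integrating $p'$ along the chord joining two hypothetical roots (mere nonvanishing of $p'$ on $\overline D$ would not suffice; one needs the half-plane containment). I would also note that the constant $\tfrac1{\sqrt2}$ is stronger than this conclusion requires --- any constant $\le 1$ suffices --- the extra slack being the standard choice in $\CEVAL$ so that the predicate is preserved when one passes between a square and its inscribed and circumscribed disks.
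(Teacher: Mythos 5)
Your proof is correct; note, though, that the paper itself does not prove this proposition but cites it from \cite{SagraloffYap:8Point,SagraloffYap:CEVAL}, and your argument is essentially the standard one found there: part~(1) is the Pellet/Taylor-tail exclusion bound, and part~(2) combines the same tail estimate for $p'$ with the Noshiro--Warschawski/Grace--Heawood-type chord-integration argument (the key insight, which you correctly emphasize, being that $p'$ must be confined to an open half-plane bounded by a line through the origin over the convex region $\overline D$, not merely nonvanishing). Your side remark about $1/\sqrt 2$ is also accurate in spirit: the conclusion of part~(2) only needs any constant strictly less than $1$, and the extra slack is consumed elsewhere in $\CEVAL$ when relating the test on a disk to the test on an enclosing square and in the $8$-point inclusion test.
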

In Case \ref{CEVAL:condition2}, the disk $D$ is further tested using the 8-point test, see \cite{SagraloffYap:8Point}, to determine if $D$ actually contains a root.  The inclusion and exclusion tests for $\SqFreeCEVAL$ are based on these conditions as well as the 8-point test.  In particular, the predicate $B_{\SqFreeCEVAL}$ on squares for $\CEVAL$ follows:
\begin{subequations}\label{equation:SqFreeCEVAL}
\begin{equation}
B_{\SqFreeCEVAL}(S)=\begin{cases}
\true &\textstyle\sum_{k=1}^d\left|\frac{p^{(k)}(m(S))}{k!\cdot p(m(S))}\right|\left(\frac{\diam(S)}{2}\right)^k<1\\[4mm]
\true&\begin{split}\textstyle\sum_{k=1}^{d-1}&\textstyle\left|\frac{p^{(k+1)}(m(S))}{k!\cdot p'(m(S))}\right|(2\cdot\diam(S))^k<\frac{1}{6}\\
&\text{ and }\textstyle
\sum_{k=1}^{d-1}\left|\frac{p^{(k+1)}(m(S))}{k!\cdot p'(m(S))}\right|(4\cdot\diam(S))^k<\frac{1}{\sqrt{2}}\end{split}\\[4mm]
\false&\text{otherwise}\end{cases}.\tag{\theequation a,\theequation b,\theequation c}
\end{equation}
\end{subequations}
If Condition \ref{equation:SqFreeCEVAL}a is passed, then there are no roots of $p$ in $S$, and if Conditions \ref{equation:SqFreeCEVAL}b and \ref{equation:SqFreeCEVAL}c are passed, then the 8-point test must be applied.  Note that the two tests, Conditions \ref{equation:SqFreeCEVAL}b and \ref{equation:SqFreeCEVAL}c, could be combined into a single test using the largest left-hand-side and the smallest right-hand-side, but we will not need this simplification here.

Next, we derive a stopping function for the $\SqFreeCEVAL$ predicate.  For any univariate polynomial $f$, define 
$\Sigma_f(x)=\sum_{\alpha\in V(f)}\frac{1}{|x-\alpha|},$ where $V(f)$ is the set of roots of $f$.  In \cite{BurrKrahmerYap:ContinuousAmortization,BurrKrahmer:SqFreeEVAL}, it is shown that 
\begin{equation}\label{eq:harmonic}
\left|\frac{f^{n}(x)}{f(x)}\right|\leq\left(\Sigma_f(x)\right)^n.
\end{equation}
If $\diam(S)\leq \frac{1}{\Sigma_p(m(S))}$, then Condition \ref{equation:SqFreeCEVAL}a holds.  In particular, by substituting Inequality \ref{eq:harmonic} and the assumption into Condition \ref{equation:SqFreeCEVAL}a, we see that that sum is bounded above by $\sum_{k=1}^d\frac{1}{k!\cdot 2^k}\leq 1$.  Similarly, if $\diam(S)<\frac{1}{14\Sigma_{p'}(m(S)}$, then Condition \ref{equation:SqFreeCEVAL}a and \ref{equation:SqFreeCEVAL}b hold.  In particular, using this inequality and Inequality \ref{eq:harmonic} for $f'$ in Condition \ref{equation:SqFreeCEVAL}b gives us that the sum is bounded above by $\sum_{k=1}^{d-1}\frac{1}{k!\cdot 7^k}<\frac{1}{6}$ and Condition \ref{equation:SqFreeCEVAL}c is bounded above by $\sum_{k=1}^{d-1}\frac{2^k}{k!\cdot 7^k}<\frac{2}{5}<\frac{1}{\sqrt{2}}$.  The two inequalities discussed here can be used to derive stopping functions for the $\SqFreeCEVAL$ predicate.

In particular, we use the relationship between $\Sigma_p$ and harmonic means, as expressed in \cite{BurrKrahmer:SqFreeEVAL}; this relationship results in the following fact: if $\diam(S)<\frac{2}{3\cdot\Sigma_p(x)}$ or $\diam(S)<\frac{1}{21\cdot \Sigma_{p'}(x)}$ for $x\in S$, then $S$ is $\SqFreeCEVAL$ terminal.  Note that the better factor of $\frac{1+\ln 2}{2\ln 2}$ of \cite{SharmaYap:ContinuousAmortization} could also be used instead of the factor of $\frac{2}{3}$ used above, but that change would not affect our results significantly.  Next, since this is a two dimensional problem, we must convert these bounds into bounds on the area of $S$.  In other words, if $\area(S)<\frac{2}{9\cdot(\Sigma_p(x))^2}$ or $\area(S)<\frac{1}{882\cdot(\Sigma_{p'}(x))^2}$, then $S$ is $\SqFreeCEVAL$-terminal.  Therefore, these two functions are stopping functions for $\SqFreeCEVAL$.

The stopping functions computed above are not quite the functions that we will use because $\Sigma_p^2$ and $\Sigma_{p'}^2$ are sums of roots and involve mixed terms.  For any univariate polynomial $f$, define $\Sigma_f^2(x)=\sum_{\alpha\in V(f)}\frac{1}{|x-\alpha|^2}$.  This new function can be related to the stopping functions above using the Cauchy-Schwarz inequality.  In particular, $(\Sigma_f)^2\leq d\Sigma_f^2$.  Therefore, $\frac{2}{9d\cdot\Sigma^2_p(x)}$ and $\frac{1}{882d\cdot \Sigma^2_{p'}(x)}$ are stopping functions for $\SqFreeCEVAL$.

Finally, we determine the regions where the stopping functions are applied.  Let the roots of $pp'$ be $\{\alpha_1,\cdots,\alpha_{2d-1}\}$ where the first $d$ roots are the roots of $p$, and, for each $\alpha_i$, let $V_{\alpha_i}$ be the Voronoi cell containing $\alpha_i$.  For each $1\leq i\leq d$, the corresponding Voronoi cell $V_{\alpha_i}$ contains a root of $p$ and no roots of $p$.  Therefore, the function $\frac{1}{882d\cdot \Sigma^2_{p'}(x)}$ is positive in this region.  Similarly, for $d+1\leq i\leq 2d-1$, the Voronoi region $V_{\alpha_i}$ contains a root of $p'$ and no roots of $p'$.  Therefore, the function $\frac{2}{9d\cdot\Sigma^2_p(x)}$ is positive in this region.  Combining these facts, a stopping function for the $\SqFreeCEVAL$ predicate is
$$
F_{\SqFreeCEVAL}(x)=\begin{cases}
\frac{1}{882d\cdot \Sigma^2_{p'}(x)}&x\in V_{\alpha_i}\text{ with }1\leq i\leq d\\
\frac{2}{9d\cdot\Sigma^2_p(x)}&x\in V_{\alpha_i}\text{ with }d+1\leq i\leq 2d-1
\end{cases}.
$$

Next, we are ready to apply continuous amortization to compute the size of the subdivision tree over the benchmark square $R=[-2^L,2^L]\times[-2^Li,2^Li]$.  Let $Q_{\SqFreeCEVAL}(R)$ be the partition of $R$ at the end of the $\SqFreeCEVAL$ bisection algorithm.  Using continuous amortization with the function $F_{\SqFreeCEVAL}$, gives a bound on the size of the subdivision tree as
\begin{align*}
\#Q_{\SqFreeCEVAL}&(R)\leq\sum_{i=1}^d\int_{V_{\alpha_i}}3528d\cdot\Sigma^2_{p'}(x)dA+\sum_{i=d+1}^{2d-1}\int_{V_{\alpha_i}}18d\cdot\Sigma^2_p(x)dA\\
&\hspace{-.5in}=\sum_{i=1}^d\int_{V_{\alpha_i}}3528d\left(\sum_{j=d+1}^{2d-1}\frac{1}{|x-\alpha_j|^2}\right)dA
+
\sum_{i=d+1}^{2d-1}\int_{V_{\alpha_i}}18d\left(\sum_{j=1}^d\frac{1}{|x-\alpha_j|^2}\right)dA. 
\end{align*}
It is somewhat complicated to integrate over the Voronoi cells, so instead, we enlarge the regions of integration similarly to the computation in Section \ref{sec:ComplexSturm}.  In particular, let $d_i$ be the distance from $\alpha_i$ to the the nearest distinct root.  Then, an inscribed circle with center $\alpha_i$ in the Voronoi cell $V_{\alpha_i}$ has radius $d_i/2$.  Since $\frac{1}{|x-\alpha_i|^2}$ is never integrated over this disk of radius $d_i/2$, let $D_{\alpha_i}$ be the annulus between the disk of radius $\sqrt{2}\cdot 2^{L+1}$ centered at $\alpha_i$ and the disk of radius $d_i/2$ centered at $\alpha_i$.  Then this annulus contains $V_{\alpha_i}$ because the outer circle of the annulus contains the entire benchmark region $R$ and the Voronoi cell $V_{\alpha_i}$ does not include $D_{\alpha_i}$.  Therefore, the size of the subdivision tree can be bounded by
\begin{align*}
\#Q_{\SqFreeCEVAL}&(R)\leq \sum_{i=1}^d\int_{D_{\alpha_i}}\frac{18d}{|x-\alpha_i|^2}dA
+
 \sum_{i=d+1}^{2d-1}\int_{D_{\alpha_i}}\frac{3528d}{|x-\alpha_i|^2}dA\\
 &=
 \sum_{i=1}^d\int_0^{2\pi}\int_{\frac{1}{2}d_i}^{\sqrt{2}\cdot 2^{L+1}}\frac{18dr}{r^2}dr\,d\theta
+
 \sum_{i=d+1}^{2d-1}\int_0^{2\pi}\int_{\frac{1}{2}d_i}^{\sqrt{2}\cdot 2^{L+1}}\frac{3528dr}{r^2}dr\,d\theta.
\end{align*}
Therefore, 
\begin{multline*}
\#Q_{\SqFreeCEVAL}(R)\\\leq\sum_{i=1}^d 36d\pi\left(\ln\left(\sqrt{2}\cdot 2^{L+1}\right)-\ln\left(\frac{1}{2}d_i\right)\right)
+
\sum_{i=d+1}^{2d-1} 7056d\pi\left(\ln\left(\sqrt{2}\cdot 2^{L+1}\right)-\ln\left(\frac{1}{2}d_i\right)\right).
\end{multline*}

Since there are $2d-1$ roots of $pp'$ and each term whose logarithm has a positive coefficient is $O(dL)$, the part of the sum whose logarithm has positive coefficient is $O(d^2L)$.  In addition, the $\frac{1}{2}$ in the logarithm with negative coefficient can be separated out, and since there are $2d-1$ terms of this type, this part of the sum contributes $O(d^2)$.  The final part of the sum can be bounded with the Mahler-Davenport bound of Lemma \ref{lemma:mahlerdavenport}.  In particular, we construct a graph $G$ by connecting each root to one of its nearest neighbors.  As in Section \ref{sec:ComplexSturm}, this graph is an undirected multi-subgraph of the nearest neighbor graph and its valence is bounded by seven.  Note that $M(pp')=M(p)M(p')$, which is bounded by $\|p\|_2\|p'\|_2$.  By assumption, $\|p\|_2\leq\sqrt{d+1}\cdot 2^L$ and thus $\|p'\|\leq d\sqrt{d}\cdot 2^L$.  Therefore, $\ln(M(pp'))$ is $O(d(L+\ln d))$.  By Corollary \ref{corollary:MahlerDavenport} (after substituting this bound for the Mahler measure) gives that the total complexity of the subdivision tree is $O(d^2(L+\ln d))$.  This bound improves upon the bound in \cite{SagraloffYap:8Point} by a factor of $(\ln d)^2$ in the case where $p'$ is square-free.

\section{Bit Complexity}\label{sec:bit}
In this section, we extend the continuous amortization technique to compute the bit-complexity of bisection-based techniques.  In particular, we present a generalization of continuous amortization to compute the value of a function on the leaves of a subdivision tree.  We use this technique to compute the bit-complexity for Descartes' rule of signs \cite{eigenwillig-sharma-yap:descartes:06} and $\SqFreeEVAL$ \cite{BurrKrahmer:SqFreeEVAL}.  Our complexity bounds match or surpass the best bounds in the literature.

\subsection{Generalizing Continuous Amortization to Bit Complexity}
In this section, we consider the more general problem of evaluating a function $g$ on each of the terminal intervals of bisection-based algorithms on $\mathbb{R}$.  In particular, let $g$ be a positive and decreasing real-valued function (whose constants may depend on the problem data).  We think of $g$ as a function on intervals where smaller intervals are more costly.  This section shows how to bound the value of the sum of the values of $g$ on the terminal intervals of Algorithm \ref{algorithm:general}.  These conditions on $g$ are not very restraining, in practice, and they are easily satisfied by the examples in this section.  This setup allows us to extend continuous amortization to compute a bound on the sum of $g$ applied to the terminal intervals of Algorithm \ref{algorithm:general}.  In particular,

\begin{proposition}\label{prop:bitcomplex}
Let $I$ be an initial interval and let $Q(I)$ be the partition of $I$ at the end of Algorithm \ref{algorithm:general}.  Let $g$ be a positive and decreasing function (whose constants may depend on the problem data).  In addition, let $F$ be a stopping function for Algorithm \ref{algorithm:general}.  Then,
$$
\sum_{J\in Q(I)}g(w(J))\leq\max\left\{g(w(I)),\int_I\frac{2g(F(x)/2)}{F(x)}dx\right\}.
$$
If the algorithm does not terminate, then the integral is infinite.
\end{proposition}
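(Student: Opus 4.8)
The plan is to adapt the leaf-by-leaf argument behind Lemma~\ref{lemma:continuousamortization} and Proposition~\ref{prop:ndimensional} (following \cite{BurrKrahmer:SqFreeEVAL}), replacing the constant integrand $2/F(x)$ by the weighted integrand $2g(F(x)/2)/F(x)$ and tracking the monotone weight $g$ through the estimate. First I would treat the trivial case $\#Q(I)=1$: then $Q(I)=\{I\}$, the left-hand side is exactly $g(w(I))$, and the claimed inequality holds via the first entry of the maximum. So I may assume $\#Q(I)>1$, in which case every $J\in Q(I)$ has a parent $J'$ on which Algorithm~\ref{algorithm:general} did \emph{not} terminate, i.e.\ $B(J')$ is false.

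The geometric input is exactly the one-dimensional specialization of what is used for Proposition~\ref{prop:ndimensional}: since $B(J')$ is false, the contrapositive of the stopping-function property gives $w(J')\geq F(y)$ for every $y\in J'$, and since $w(J)=\tfrac{1}{2}w(J')$ with $J\subseteq J'$, this yields $F(x)\leq 2\,w(J)$ for all $x\in J$. The genuinely new step is to convert this width bound into a bound on $g(w(J))$. Because $g$ is positive and decreasing and $F(x)/2\leq w(J)$, monotonicity gives $g(F(x)/2)\geq g(w(J))\geq 0$ for $x\in J$; combining this with $F(x)\leq 2w(J)$ produces the pointwise estimate
$$
\frac{2\,g(F(x)/2)}{F(x)}\ \geq\ \frac{2\,g(w(J))}{F(x)}\ \geq\ \frac{g(w(J))}{w(J)}\qquad\text{for all }x\in J .
$$
Integrating over $J$, whose length is $w(J)$, gives $\int_J \frac{2\,g(F(x)/2)}{F(x)}\,dx\geq g(w(J))$ (if $F$ vanishes on a positive-measure subset of $J$ the integrand is $+\infty$ there and the inequality is automatic, exactly as in Lemma~\ref{lemma:continuousamortization}). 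Summing over $J\in Q(I)$ and using that $Q(I)$ partitions $I$ up to a measure-zero set of endpoints,
$$
\sum_{J\in Q(I)}g(w(J))\ \leq\ \sum_{J\in Q(I)}\int_J \frac{2\,g(F(x)/2)}{F(x)}\,dx\ =\ \int_I \frac{2\,g(F(x)/2)}{F(x)}\,dx ,
$$
which is bounded by the maximum in the statement. The non-terminating case goes as in Proposition~\ref{prop:ndimensional}: the displayed integral bounds the analogous sum over the leaves present at any finite stage, and since $g(w(J))\geq g(w(I))>0$ for each leaf while the number of leaves grows without bound, that sum diverges, so the integral is infinite.

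I expect no serious obstacle --- this is a direct adaptation of the existing continuous amortization proof --- but the step that requires care, and the one I would check most carefully, is the argument of $g$ inside the integral: one must use $g(F(x)/2)$, not $g(F(x))$ or $g(w(J))$, since the width bound controls $g(w(J))$ only from above by $g(F(x)/2)$, and dropping the factor $\tfrac{1}{2}$ would invalidate the pointwise estimate. It is also worth confirming that the two hypotheses on $g$ are exactly what is consumed: positivity is used to multiply the monotonicity inequality by $2/F(x)>0$ and again by $g(w(J))\geq 0$, while the decreasing hypothesis is used only in the step $g(F(x)/2)\geq g(w(J))$; the clause that $g$'s constants may depend on the problem data plays no role beyond allowing $g$ to be chosen after $p$, $d$, and $L$ are fixed.
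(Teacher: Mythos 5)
Your proof is correct and takes essentially the same approach as the paper's: the trivial one-leaf case, the pointwise bound $w(J)\geq F(x)/2$ extracted from the parent's failure to terminate, the monotonicity of $g$ to pass to $g(F(x)/2)\geq g(w(J))$, and the leaf-by-leaf integral lower bound summed over $Q(I)$. The only cosmetic difference is that the paper introduces a point $z\in J$ where $F$ is maximal and bounds the integrand by the constant at $z$, whereas you estimate the integrand pointwise directly; the two are interchangeable.
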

\begin{proof}
As in Proposition \ref{prop:ndimensional}, this proof follows the argument in \cite{BurrKrahmer:SqFreeEVAL}.  If $\#Q(I)=1$, then the bound is immediate.  If $\#Q(I)>1$, then let $J$ be any interval in the partition $Q(I)$.  There is a lower bound on the $w(J)$ because Algorithm \ref{algorithm:general} did not terminate on the parent of $J$.  In particular, 
$$
\forall y\in J, w(J)\geq\frac{1}{2}F(y).
$$
On the other hand, since 
$$
\int_I\frac{2g(F(x)/2)}{F(x)}dx=\sum_{J\in Q(I)}\int_J\frac{2g(F(x)/2)}{F(x)}dx,
$$
it is sufficient to show that the value of this integral on each terminal interval $J$ is at least $g(w(J))$.  For a fixed $J$, let $z\in S$ be such that $F(z)$ is maximal in $J$.  Then,
\begin{equation*}
\int_J\frac{2g(F(x)/2)}{F(x)}dx\geq\int_J\frac{2g(F(z)/2)}{F(z)}dx\geq \frac{2g(F(z)/2)}{F(z)}\cdot w(J)
\geq \frac{2g(w(J))}{F(z)}\cdot \frac{F(z)}{2}=g(w(J)).
\end{equation*}
If Algorithm \ref{algorithm:general} does not terminate then the integral above is a bound for the value of $g$ at any point in time, but since $g$ is positive and increases with more subdivisions, this integral is greater than an arbitrarily large sum of positive values which are bounded below, and, hence infinite.
\end{proof}

\begin{remark}\label{remark:depth}
In most of the cases we will be considering, the function $g$ can be written in terms of the depth of the tree.  We can understand such a function as follows: Let $I$ be the initial interval for a bisection algorithm; each time $I$ is subdivided, the width of the resulting intervals are decreased by half.  Therefore, for an interval $J$ in the subdivision tree for Algorithm \ref{algorithm:general}, the depth of $J$ is $\log_2(w(I)/w(J))$ (where the depth of the root is assumed to be 0).  Note also that this function is decreasing in $w(J)$.  From the proof of Proposition \ref{prop:bitcomplex}, we see that $w(J)\geq F(x)/2$.  Therefore, in most of our applications, the integrand will include a term of the form $\log_2(2w(I)/F(x))$, which is an upper bound on the depth of the subdivision tree at $x\in I$.
\end{remark}

\begin{remark}
Note that the standard continuous amortization construction is the special case of this extension where the function $g\equiv1$.
\end{remark}

\begin{remark}
This extension of continuous amortization to compute the complexity of the sum of a function applied to terminal intervals can be directly combined with the extension of continuous amortization to higher dimensions of Section \ref{sec:CAHigher} in an analogous way to the extension of one-dimensional continuous amortization to higher dimensions.
\end{remark}

\begin{remark}
Note that by using H\"older's inequality (with $p=1$ and $q=\infty$) on this integral we have the following product.
\begin{subequations}\label{equation:product}
\begin{equation*}
\int_I\frac{2g(F(x)/2)}{F(x)}dx\leq\left(\int_I\frac{2dx}{F(x)}\right)\left(\sup_{x\in I} g(F(x)/2)\right)\eqlabel
\end{equation*}
\end{subequations}
Factor \ref{equation:product}a is the standard continuous amortization bound for the size of the subdivision tree, and Factor \ref{equation:product}b is an upper bound for $g$ at the deepest point of the subdivision tree (i.e., a global upper bound on the value of $g$ on any node in the tree).  This product corresponds to one of the standard techniques for computing the bit-complexity of an algorithm.  In particular, the maximum bit cost is multiplied by the size of the tree.  The value of the continuous amortization integral, however, may result in a tighter bound than what would be computed through this standard technique, see Remark \ref{remark:sqfreeevalbit} for such a discussion.
\end{remark}

\begin{remark}
In the next two sections, we apply this technique for computing the bit-complexity of two algorithms for the isolation of the real roots of a polynomial.  In these cases, it is known that the bit complexity of the algorithm is $\widetilde{O}(d^4L^2)$ and that the bit complexity of a single node includes a term of the form $\widetilde{O}(d^2h)$.  It is also known that the depth of the tree may be $dL$.  A simple lower bound can be found by adding up the bit-complexity cost along a path of maximum length.  In this computation, the resulting complexity is $\widetilde{O}(d^4L^2)$; note that this is a bound for a single path in the tree and a lower bound for the bit-cost of the entire tree.  With tight bound, such as these, we cannot hope for a asymptotic improvement in the leading terms, but the computations using continuous amortization presented in the next sections result in better bounds in special cases.
\end{remark}

\subsection{Descartes' Rule of Signs}\label{sec:Descartesbit}
In this section, we apply the continuous amortization technique for computing the bit-complexity of a function to the Descartes' rule of signs algorithm.  In \cite{eigenwillig-sharma-yap:descartes:06}, it is shown that bit-complexity of computing at a node of depth $h$ in the subdivision tree is $O(d^3L+d^3h)$ using classical arithmetic and $\widetilde{O}(d^2L+d^2h)$ using asymptotically fast multiplication.  In this section, we use the stopping function $F_{\Descartes}$, as derived in Section \ref{sec:Descartes}.  Finally, since the benchmark interval is $I=[-2^L,2^L]$, the benchmark interval has width $w(I)=2^{L+1}$.

We cannot apply continuous amortization to this problem directly because the bit-complexity costs quoted above are charged to both internal nodes and leaves of the subdivision tree.  In order to turn the bit-complexity into a function on the leaves of the tree, we divide the cost of each internal node among its two children.  This cost accumulates in the leaves of the tree.  In particular, for a leaf of depth $h_0$, the cost accumulated in the leaf is 
$$
\sum_{i=0}^{h_0-1}\frac{1}{2^i}C(d^3L+d^3(h_0-i))\leq 2C(d^3L+d^3h_0)
$$
for classical arithmetic, where $C$ is the constant suppressed by the $O$.  In addition, in the asymptotically fast case, the cost in the leaf of depth $h_0$ is
$$
\sum_{i=0}^{h_0-1}\frac{1}{2^i}(d^2L+d^2(h_0-i))k(d,L,h_0-i)\leq 2(d^2L+d^2h_0)k(d,L,h_0),
$$
where $k(d,L,h)$ holds the leading constant and logarithmic factors for a node of depth $h$, which is suppressed by the $\widetilde{O}$.  These remaining functions are functions on the leaves of the tree and continuous amortization applies to them.

Beginning classically, we evaluate the following function, where the expression of Remark \ref{remark:depth} has been substituted for $h$.
\begin{subequations}\label{equation:Descartes:bit}
\begin{multline*}
\int_I\frac{4C(d^3L+d^3\log_2(2w(I)/F_{\Descartes}(x)))}{F_{\Descartes(x)}}dx\\=2C(2d^3L+2d^3)\int_I\frac{2dx}{F_{\Descartes}(x)}-2Cd^3\int_I\frac{2\log_2(F_{\Descartes}(x))}{F_{\Descartes}(x)}dx.\eqlabel
\end{multline*}
\end{subequations}
Integral \ref{equation:Descartes:bit} is just the standard continuous amortization bound on the size of the subdivision tree.  Integral \ref{equation:Descartes:bit} is calculated below.  Similarly, for asymptotically fast multiplication, continuous amortization requires the computation of the following integral
\begin{equation}\label{Descartes:bit:asymptotic}
\int_I\frac{4(d^2L+d^2\log_2(2w(I)/F_{\Descartes}(x)))k(d,L,h)}{F_{\Descartes(x)}}dx.
\end{equation}
Using H\"older's inequality, the logarithmic factor $k(d,L,h)$ can be bounded by its value on the maximum depth of the tree, which is $O(d(L+\ln d))$.  Therefore, $k$ contributes a constant times logarithmic factors in $d$ and $L$ and will be ignored.  The remaining portion of Integral \ref{Descartes:bit:asymptotic} can be bounded as:
\begin{subequations}\label{equation:Descartes:bit:asymptotic}
\begin{multline*}
\int_I\frac{4(d^2L+d^2\log_2(2w(I)/F_{\Descartes}(x)))}{F_{\Descartes(x)}}dx\\
=2(2d^2L+2d^2)\int_I\frac{2dx}{F_{\Descartes}(x)}-2d^2\int_I\frac{2\log_2(F_{\Descartes}(x))}{F_{\Descartes}(x)}dx.\eqlabel
\end{multline*}
\end{subequations}
Note that the integrals in Equation \ref{equation:Descartes:bit:asymptotic} are exactly the same as in Equation \ref{equation:Descartes:bit}.  In particular, we have already noted that Integral \ref{equation:Descartes:bit:asymptotic}a computes a bound on the size of the subdivision tree.  From Section \ref{sec:Descartes}, the subdivision tree is $O(d(L+\ln d))$.  Substituting this bound into Integrals \ref{equation:Descartes:bit}a and \ref{equation:Descartes:bit:asymptotic}a results in complexities of $\widetilde{O}(d^4L^2)$ and $\widetilde{O}(d^3L^2)$, respectively.  These will be seen to be lower-order terms than the contributions of Integrals \ref{equation:Descartes:bit}b and \ref{equation:Descartes:bit:asymptotic}b.  It remains to compute Integrals \ref{equation:Descartes:bit}b and \ref{equation:Descartes:bit:asymptotic}b, which is the same integral, i.e., 
$$
-\int_I\frac{2\log_2(F_{\Descartes}(x))}{F_{\Descartes}(x)}dx,
$$
which we now evaluate.  The first step is to convert the $\log_2$ to a natural logarithm by dividing by $\ln 2$.  Let $\{\alpha_1,\cdots,\alpha_d\}$ be the roots of $p$, where, w.l.o.g., the first $k$ roots are the real roots.  Recall that for real roots, we use both $F_{\Descartes,1}$ and $F_{\Descartes,2}$ as stopping functions.  Let $\alpha_j$ be a real root; then, we calculate
\begin{subequations}\label{bit:Descartes:parts}
\begin{multline*}
-\int_{I_{\alpha_j}}\frac{2\log_2(F_{\Descartes}(x))}{F_{\Descartes}(x)}dx=\\
-\int_{I_{\alpha_j}\setminus J_{\alpha_j}}{\frac{2\ln(F_{\Descartes,1}(x))}{F_{\Descartes,1}(x)}}
dx
-\int_{\alpha_j-\frac{1}{1+\sqrt{3}}d_j}^{\alpha_j+\frac{1}{1+\sqrt{3}}d_j}{\frac{2\ln(F_{\Descartes,2}(x))}{F_{\Descartes,2}(x)}}
dx\eqlabel
\end{multline*}
\end{subequations}

For Integral \ref{bit:Descartes:parts}a, we evaluate it as follows
\begin{multline*}
-\int_{I_{\alpha_j}\setminus J_{\alpha_j}}{\frac{2\ln(F_{\Descartes,1}(x))}{F_{\Descartes,1}(x)}}
dx=-\int_{a_j}^{\alpha_j-\frac{1}{1+\sqrt{3}}d_j}{\frac{2\ln(\alpha_j-x)}{\alpha_j-x}}dx-\int_{\alpha_j+\frac{1}{1+\sqrt{3}}d_j}^{b_j}{\frac{2\ln(x-\alpha_j)}{x-\alpha_j}}dx\\
=2\left(\ln\left(\frac{1}{1+\sqrt{3}}d_j\right)\right)^2-\left(\ln(\alpha_j-a_j)\right)^2-\left(\ln(b_j-\alpha_j)\right)^2.
\end{multline*}
The terms with negative coefficient are bounded above by zero and can be ignored.  The term with positive coefficient can be bounded above by 
$$2\left(\ln d_j\right)^2-4\ln\left(1+\sqrt{3}\right)\ln d_j+2\left(\ln\left(1+\sqrt{3}\right)\right)^2.$$  The term $2\left(\ln\left(1+\sqrt{3}\right)\right)^2$
appears once for each root, and, since there are $d$ roots, is $O(d)$.  The term $-4\ln\left(1+\sqrt{3}\right)\ln d_j$ becomes the negative of a sum of logarithms of distances between roots.  This sum was proved to be $O(d(L+\ln d))$ in Section \ref{sec:Descartes}.  The remaining terms could be bounded by $O\left(d^2(L+\ln d)^2\right)$ using the Mahler-Davenport bound, Lemma \ref{lemma:mahlerdavenport}, but, for now, we do not evaluate that term.

To bound Integral \ref{bit:Descartes:parts}b, we use, as above, the Inequality \ref{eq:Descartes}.  This integral is then
\begin{multline*}
-\int_{\alpha_j-\frac{1}{1+\sqrt{3}}d_j}^{\alpha_j+\frac{1}{1+\sqrt{3}}d_j}{\frac{2\ln(F_{\Descartes,2}(x))}{F_{\Descartes,2}(x)}}
dx\leq-\int_{\alpha_j-\frac{1}{1+\sqrt{3}}d_j}^{\alpha_j+\frac{1}{1+\sqrt{3}}d_j}{\frac{2\sqrt{3}\ln\left(\frac{1}{\sqrt{3}}(d_j-|\alpha_j-x|)\right)}{d_j-|\alpha_j-x|}}dx\\
=4\sqrt{3}\ln\left(\frac{\sqrt{3}}{1+\sqrt{3}}\right)\ln d_j+2\sqrt{3}\ln\left(\frac{1+\sqrt{3}}{\sqrt{3}}\right)\ln(3+\sqrt{3}).
\end{multline*}
Since the roots are all bounded in magnitude by $2^L$ and there are at most $d$ terms of the above form, one for each root, these integrals are $\widetilde{O}(dL)$.  In the final computation, this term is a lower-order term and can be ignored.

The final integral to evaluate occurs when $\alpha_j$ is a complex root.  In this case, only $F_{\Descartes,1}$ applies as follows:  
\begin{subequations}\label{Descartes:bit:complex}
\begin{multline*}
-\int_{I_{\alpha_j}}{\frac{2\ln(F_{\Descartes,1}(x))}{F_{\Descartes}(x)}}dx=-\int_{I_{\alpha_j}}{\frac{2\ln(F_{\Descartes,1}(x))}{F_{\Descartes,1}(x)}}dx\\
=-\int_{a_j}^{b_j}\frac{2\ln(|\alpha_j-x|)}{|\alpha_j-x|}dx
=-\int_{a_j}^{b_j}\frac{2\ln(2|\alpha_j-x|)-2\ln 2}{|\alpha_j-x|}dx\\
\leq -\int_{a_j}^{b_j}\frac{2\ln(x-\real(\alpha_j)+|\alpha_j-x|)}{|\alpha_j-x|}dx+\int_{a_j}^{b_j}\frac{2\ln 2}{|\alpha_j-x|}dx.\eqlabel
\end{multline*}
\end{subequations}
The inequality holds because $|x-\real(\alpha_j)|\leq |\alpha_j-x|$ and the sign of the integral is negative.  Integral \ref{Descartes:bit:complex}a was calculated in Section \ref{sec:Descartes} (up to constant factors) and is $\widetilde{O}(dL)$.  In the final computation, this term is a lower-order term and can be ignored.  The derivative of the logarithm in Integral \ref{Descartes:bit:complex}b is, up to sign, the denominator of this integral.  Therefore, Integral \ref{Descartes:bit:complex}b simplifies to
\begin{multline*}
\shoveleft{-\int_{a_j}^{b_j}\frac{2\ln(x-\real(\alpha_j)+|\alpha_j-x|)}{|\alpha_j-x|}dx\hfill}\\
\shoveright{\hfill=2\left(\ln(|\imag(\alpha_j)|)\right)^2-\left(\ln\left(\real(\alpha_j)-a_j+|\alpha_j-a_j|\right)\right)^2-\left(\ln\left(b_j-\real(\alpha_j)-|\alpha_j-b_j|\right)\right)^2.}
\end{multline*}
The terms with negative coefficient are bounded above by zero and can be ignored.  Collecting all of the highest order terms shows that they are $\sum 2\left(\ln d_j\right)^2$ for real roots and $\sum 2\left(\ln(|\imag(\alpha_j)|)\right)^2$ for complex roots.  These are the same sums, but squared, as were computed for the size of the subdivision tree for Descartes' rule of signs, see Section \ref{sec:Descartes}.  Since this sum is less than the square of the sum of the logarithms, these sums are $O(d^2(L+\ln d)^2)$.  Combining all of the data above, the resulting complexities are $O(d^5(L+\ln d)^2)$ for classical arithmetic and $\widetilde{O}(d^4L^2)$ for asymptotically fast multiplication, matching the bound in \cite{eigenwillig-sharma-yap:descartes:06}.

\begin{remark}
One may ask how the constants in these bounds compare with those in \cite{eigenwillig-sharma-yap:descartes:06}.  The coefficient of the sum of squares of logarithms (base 2) in the calculation above is 4 while the coefficient in \cite{eigenwillig-sharma-yap:descartes:06} can be made to be 1.  To achieve this coefficient of 1, some of the ideas from the above calculation must be incorporated into \cite{eigenwillig-sharma-yap:descartes:06}.  In particular, without going into the details of the calculation in \cite{eigenwillig-sharma-yap:descartes:06}, the charging scheme which charges the bit-complexity to the leaves must be tuned for the trimmed tree $T'$ in \cite{eigenwillig-sharma-yap:descartes:06}.  In this case, the cost for each leaf of the trimmed tree is twice the sum of the bit-costs of the path from the leaf to the root (the factor of 2 accounts for trimmed leaves along the path).

The main reason that this computation results in an extra factor of 4 is that the stopping function assumes that each root can influence neighbors to both its left and right; whereas, in \cite{eigenwillig-sharma-yap:descartes:06}, it is shown that each terminal interval in the trimmed tree $T'$ can be associated with a unique root and that unique root lies to one side of the interval (not both sides).  Because our stopping function does not reflect this fact, the results for Descartes' rule of signs are two (or two-squared) times too large.
\end{remark}

\subsection{\texttt{SqFreeEVAL}}
The $\SqFreeEVAL$ algorithm was introduced in \cite{BurrKrahmer:SqFreeEVAL} and other versions of it were studied in \cite{BurrKrahmerYap:ContinuousAmortization,SagraloffYap:8Point,SagraloffYap:CEVAL,SharmaYap:ContinuousAmortization}.  The algorithm $\SqFreeEVAL$ takes, as input, the square free parts of polynomial\footnote{Throughout this section, we assume that $p$ and $p'$ are both square free.  If $p$ or $p'$ were replaced by its square-free components, then small changes must be made to the text, especially when discussing the degrees of these polynomials.  The final results, however, do not change.} $p$ and its derivative $p'$ and isolates the roots of $p$.  In \cite{BurrKrahmer:SqFreeEVAL}, continuous amortization was used to show that the complexity of this algorithm is $O(d(L+\ln d))$.  We briefly recall the pertinent information for this computation.

The predicate for $\SqFreeEVAL$ is the following function on intervals $J$
$$
B_{\SqFreeEVAL}(J)=\begin{cases}\true&|p(m(J))|>\sum_{i=1}^d\frac{|p^{(i)}(m(J))|}{i!}\left(\frac{w(J)}{2}\right)^i\\
\true &|p'(m(J))|>\sum_{i=1}^{d-1}\frac{|p^{(i+1)}(m(J))|}{i!}\left(\frac{w(J)}{2}\right)^i\\
\false&\text{otherwise}
\end{cases}.
$$
These inequalities come from applying a reverse triangle inequality to the Taylor series for $p$ and $p'$ centered at $m(J)$.  The exact formulation of $B_{\SqFreeEVAL}$ is not important for this section because we use the stopping functions for $B_{\SqFreeEVAL}$ which were developed in \cite{BurrKrahmer:SqFreeEVAL} for this predicate.  In particular, let $\{\alpha_1,\cdots,\alpha_{2d-1}\}$ be the combined roots of $p$ and $p'$, and, assume, w.l.o.g., that $\{\alpha_1,\cdots,\alpha_d\}$ are the roots of $p$.  For each root $\alpha_i$, let $I_{\alpha_i}$ be the set of points in the benchmark interval $I=[-2^L,2^L]$ whose distance to the root $\alpha_i$ is not more than the distance to the other roots of $p$ and $p'$.  Then, a stopping function for this predicate is based on two stopping functions:
\begin{subequations}\label{equation:SqFreeEVAL:stopping}
\begin{multline*}
F_{\SqFreeEVAL}(x)=\\
\begin{cases}F_{\SqFreeEVAL,1}(x)=\frac{2}{3}\left/\left(\sum_{j=d+1}^{2d-1}\frac{1}{|x-\alpha_j|}\right)\right.&x\in I_{\alpha_i}\text{ with }1\leq i\leq d\\
F_{\SqFreeEVAL,2}(x)=\frac{2}{3}\left/\left(\sum_{j=1}^d\frac{1}{|x-\alpha_j|}\right)\right.&x\in I_{\alpha_i}\text{ with }d+1\leq i\leq 2d-1
\end{cases}.\eqlabel
\end{multline*}
\end{subequations}
In particular, Expression \ref{equation:SqFreeEVAL:stopping}a is a sum over the roots of $p'$ and is used near roots of $p$; while Expression \ref{equation:SqFreeEVAL:stopping}b is a sum over the roots of $p$ and is used near roots of $p'$.  Additionally, in \cite{SagraloffYap:8Point,SagraloffYap:CEVAL}, it is shown that the bit-complexity at a node of the subdivision tree of depth $h$ is $\widetilde{O}(dL+d^2h)$, using asymptotically fast Taylor shifts.  We will write the complexity for a node as $(dL+d^2h)k(d,L,h)$ where $k$ includes the coefficients and logarithmic terms.

Using the same tricks as in Section \ref{sec:Descartesbit}, we charge the leaves for the bit complexity of the internal nodes by splitting each internal node's cost in half among its two children.  From this computation, we know that the bit-cost for a leaf of depth $h$ is bounded by $2(dL+d^2h)k(d,L,h)$.  Now, using this formulation and continuous amortization for bit-complexity, the bit complexity of the $\SqFreeEVAL$ algorithm is
$$
\int_I\frac{4(dL+d^2\log_2(2w(I)/F_{\SqFreeEVAL}(x)))k(d,L,h)}{F_{\SqFreeEVAL}(x)}dx.
$$
Using H\"older's inequality, the logarithmic factor $k(d,L,h)$ can be factored out and bounded in terms of the maximum depth of the subdivision tree, which is $O(d(L+\ln d))$.  Therefore, $k$ contributes a constant times logarithmic factors in $d$ and $L$ and will be ignored.  The remainder of the integral can be bounded as follows:
\begin{subequations}\label{equation:sqfreeeval:split}
\begin{multline*}
\int_I\frac{4(dL+d^2\log_2(2w(I)/F_{\SqFreeEVAL}(x)))}{F_{\SqFreeEVAL}(x)}dx\\=2(dL+d^2L+2d^2)\int_I\frac{2dx}{F_{\SqFreeEVAL}(x)}+2d^2\int_I\frac{2\log_2(1/F_{\SqFreeEVAL}(x))}{F_{\SqFreeEVAL}(x)}dx.\eqlabel
\end{multline*}
\end{subequations}
Integral \ref{equation:sqfreeeval:split}a is a bound on the size of the subdivision tree using continuous amortization, which was computed in \cite{BurrKrahmer:SqFreeEVAL} to be $O(d(L+\ln d))$.  Therefore, Integral \ref{equation:sqfreeeval:split}a is bounded above by $\widetilde{O}(d^3L)$.  It will turn out that this is a lower-order term and can be ignored.

Next, we evaluate the Integral \ref{equation:sqfreeeval:split}b.  The first step is to convert the $\log_2$ to a natural logarithm by dividing by $\ln 2$.  Then Integral \ref{equation:sqfreeeval:split}b can be split along the roots $\alpha_i$.  In particular, 
\begin{subequations}\label{equation:sqfreeeval:second}
\begin{multline*}
\int_I\frac{2\ln(1/F_{\SqFreeEVAL}(x))}{F_{\SqFreeEVAL}(x)}dx\\
=\sum_{i=1}^d\int_{I_{\alpha_i}}\frac{2\ln(1/F_{\SqFreeEVAL,1}(x))}{F_{\SqFreeEVAL,1}(x)}dx+\sum_{i=d+1}^{2d-1}\int_{I_{\alpha_i}}\frac{2\ln(1/F_{\SqFreeEVAL,2}(x))}{F_{\SqFreeEVAL,2}(x)}dx.\eqlabel
\end{multline*}
\end{subequations}
Both $1/F_{\SqFreeEVAL,1}$ and $1/F_{\SqFreeEVAL,2}$ are composed of sums of reciprocals of distances to roots.  Before calculating these integrals, we use the log-sum inequality, see, e.g., \cite{Csiszar:2004:ITS:1166379.1166380}, to make these integrals easier to calculate.  In particular for Integral \ref{equation:sqfreeeval:split}a,
\begin{align}
\sum_{i=1}^d\int_{I_{\alpha_i}}\frac{2\ln(1/F_{\SqFreeEVAL,1}(x))}{F_{\SqFreeEVAL,1}(x)}dx
&=
\sum_{i=1}^d\int_{I_{\alpha_i}}3\left(\sum_{j=d+1}^{2d-1}\frac{1}{|x-\alpha_j|}\right)\ln\left(\frac{3}{2}\sum_{j=d+1}^{2d-1}\frac{1}{|x-\alpha_j|}\right)dx\notag\\
&\leq\sum_{i=1}^d\int_{I_{\alpha_i}}\sum_{j=d+1}^{2d-1}\frac{3}{|x-\alpha_j|}\ln\left(\frac{3d}{2|x-\alpha_j|}\right)dx.\label{equation:sqfree:eval:afterlogsum}
\end{align}
This sum of integrals skips over all intervals $I_{\alpha_i}$ for the roots of $p'$; we, however, can enlarge the region of integration, where each root of $p'$ is integrated over the larger interval containing all but its corresponding interval.  In particular,
$$
\sum_{i=1}^d\int_{I_{\alpha_i}}\sum_{j=d+1}^{2d-1}\frac{3}{|x-\alpha_j|}\ln\left(\frac{3d}{2|x-\alpha_j|}\right)dx
\leq \sum_{j=d+1}^{2d-1}\int_{I\setminus I_{\alpha_j}}\frac{3}{|x-\alpha_j|}\ln\left(\frac{3d}{2|x-\alpha_j|}\right)dx.
$$
A similar computation can be performed on Integral \ref{equation:sqfreeeval:split}b to find that
\begin{align*}
\sum_{i=d+1}^{2d-1}\int_{I_{\alpha_i}}\frac{2\ln(1/F_{\SqFreeEVAL,2}(x))}{F_{\SqFreeEVAL,2}(x)}dx
&=
\sum_{i=d+1}^{2d-1}\int_{I_{\alpha_i}}3\left(\sum_{j=1}^{d}\frac{1}{|x-\alpha_j|}\right)\ln\left(\frac{3}{2}\sum_{j=1}^{d}\frac{1}{|x-\alpha_j|}\right)dx\\
&\leq \sum_{j=1}^{d}\int_{I\setminus I_{\alpha_j}}\frac{3}{|x-\alpha_j|}\ln\left(\frac{3d}{2|x-\alpha_j|}\right)dx.
\end{align*}
Therefore, Integral \ref{equation:sqfreeeval:second} can be bounded with 
$$
\int_I\frac{2\ln(1/F_{\SqFreeEVAL}(x))}{F_{\SqFreeEVAL}(x)}dx\leq \sum_{j=1}^{2d-1}\int_{I\setminus I_{\alpha_j}}\frac{3}{|x-\alpha_j|}\ln\left(\frac{3d}{2|x-\alpha_j|}\right)dx.
$$
Separating out the $3d/2$ in the logarithm results in
\begin{subequations}\label{sqfreeEVAL:final}
\begin{multline*}
\sum_{j=1}^{2d-1}\int_{I\setminus I_{\alpha_j}}\frac{3}{|x-\alpha_j|}\ln\left(\frac{3d}{2|x-\alpha_j|}\right)dx\\
=
\sum_{j=1}^{2d-1}\int_{I\setminus I_{\alpha_j}}\frac{3}{|x-\alpha_j|}\ln\left(\frac{3d}{2}\right)dx
-
\sum_{j=1}^{2d-1}\int_{I\setminus I_{\alpha_j}}\frac{3}{|x-\alpha_j|}\ln\left(|x-\alpha_j|\right)dx.\eqlabel
\end{multline*}
\end{subequations}
Integral \ref{sqfreeEVAL:final}a, without the $\ln d$ factor, was already computed in \cite{BurrKrahmer:SqFreeEVAL} to be $O(d(L+\ln d))$.  Integral \ref{sqfreeEVAL:final}a, therefore, is $\widetilde{O}(dL)$.  In the final computation, this is a lower-order term and can be ignored.  

The remaining integral, Integral \ref{sqfreeEVAL:final}b, can be evaluated as follows.  Let $\alpha_j$ be real and let $I_{\alpha_j}=[a_j,b_j]$.  Then,
\begin{multline*}
-\int_{I\setminus I_{\alpha_j}}\frac{3}{|x-\alpha_j|}\ln\left(|x-\alpha_j|\right)dx=-3\int_{-2^L}^{a_j}\frac{\ln(\alpha_j-x))}{\alpha_j-x}dx-3\int_{b_j}^{2^L}\frac{\ln(x-\alpha_j)}{x-\alpha_j}dx\\
=\frac{3}{2}(\ln(|\alpha_j-a_j|))^2+\frac{3}{2}(\ln(|\alpha_j-b_j|))^2-\frac{3}{2}(\ln(\alpha_j+2^L))^2-\frac{3}{2}(\ln(2^L-\alpha_j))^2.
\end{multline*}
The terms with negative leading coefficient are bounded above by zero.  The terms with positive coefficient can be bounded in terms of the logarithm of distances between roots.  These terms are squares of terms that appear in the analysis in \cite{BurrKrahmer:SqFreeEVAL}.  There, the negative of a sum of logarithms are shown to be $O(d(L+\ln d))$; therefore, this sum can be bounded by $O(d^2(L+\ln d)^2)$.

Finally, let $\alpha_j$ be a complex root, we begin by enlarging the region of integration in Integral \ref{sqfreeEVAL:final}b from $I\setminus I_{\alpha_i}$ to $I$, and bound Integral \ref{sqfreeEVAL:final}b as in Section \ref{sec:Descartesbit}.  In particular,
\begin{subequations}\label{equation:sqfreeeval:complex}
\begin{multline*}
-\int_{I\setminus I_{\alpha_j}}\frac{3}{|x-\alpha_j|}\ln\left(|x-\alpha_j|\right)dx\leq
-\int_{I}\frac{3}{|x-\alpha_j|}\ln\left(|x-\alpha_j|\right)dx\leq\int_I\frac{3\ln 2}{|x-\alpha_j|}dx\\
-\int_{-2^L}^{\real(\alpha_j)}\frac{3\ln(\real(\alpha_j)-x+|\alpha_j-x|)}{|\alpha_j-x|}dx-\int_{\real(\alpha_j)}^{2^L}\frac{3\ln(x-\real(\alpha_j)+|\alpha_j-x|)}{|\alpha_j-x|}dx.\tag{\theequation a,\theequation b,\theequation c}
\end{multline*}
\end{subequations}
Integral \ref{equation:sqfreeeval:complex} is one of the integrals computed in \cite{BurrKrahmer:SqFreeEVAL} (up to a constant), and, there, it was shown to be $\widetilde{O}(dL)$.  In the final computation, this is a lower-order term, and it can be ignored.  Integrals \ref{equation:sqfreeeval:complex}a and \ref{equation:sqfreeeval:complex}b can be computed as follows:
$$
3(\ln(|\imag(\alpha_j)|))^2-\frac{3}{2}\left(\ln\left(2^L-\real(\alpha_j)+|\alpha_j-2^L|\right)\right)^2-
\frac{3}{2}\left(\ln\left(\real(\alpha_j)+2^L+|\alpha_j+2^L|\right)\right)^2.
$$
The terms with negative coefficient are bounded above by 0, and they can be ignored.  The terms with positive coefficient are the squares of the logarithms of half the distance between complex conjugates.  In \cite{BurrKrahmer:SqFreeEVAL}, these logarithms were shown to be $O(d(L+\ln d))$.  The squares are, therefore, $O(d^2(L+\ln d)^2)$.  Substituting everything in, we have that the bit complexity for $\SqFreeEVAL$ is $\widetilde{O}(d^4L^2)$, matching the bound in \cite{SagraloffYap:8Point,SagraloffYap:CEVAL}.

\begin{remark}\label{remark:sqfreeevalbit}
The bound above on the bit-complexity was first computed in \cite{SagraloffYap:8Point}; it could also be computed by calculating the maximum bit-complexity of any node in the tree and then multiplying this by the size of the tree, cf. \cite{SagraloffYap:CEVAL,BurrKrahmer:SqFreeEVAL}.  The adaptivity of different methods to compute the bit-complexity varies greatly; we will see that the continuous amortization calculation is one of the most sensitive calculations.

It is hard to determine the sensitivity of the calculation in \cite{SagraloffYap:8Point,SagraloffYap:CEVAL} because the bounds on the paper do not explicitly depend on the distances between roots.  In particular, the Mahler-Davenport bound is used on clusters of roots.  In addition, the distances between roots in clusters are bounded above and below and those bounds are used instead of the explicit distances between roots.  It would be a significant undertaking (if it is even possible) to reformulate the bounds in \cite{SagraloffYap:8Point,SagraloffYap:CEVAL} to be as adaptive as those presented here.

The computation of the bit-complexity of $\SqFreeEVAL$ can be performed in several ways.  In particular, we have different ways to calculate the size of the subdivision tree and the depth of the tree (e.g., the maximum depth of the tree can be used to bound the maximum bit-complexity of a node).  The maximum depth of the tree can be calculated as follows:
\begin{inparaenum}[(1)]
\item The simplest bound on the depth of the tree is to use the total size of the tree to bound the depth.
\item Using stopping functions, one can see that the depth of the tree is bounded above by $\widetilde{O}(L-\log_2(|\alpha_i-\alpha_j|)$ where $\alpha_i$ and $\alpha_j$ are the closest pair of roots of $p$ and $p'$.
\end{inparaenum}

In addition, the size of the tree can be calculated in several ways including
\begin{inparaenum}[(1)]
\item The complexity bound of \cite{SagraloffYap:8Point,SagraloffYap:CEVAL}.
\item The continuous amortization-based complexity bound in \cite{BurrKrahmer:SqFreeEVAL}.
\end{inparaenum}
Any combination of these approaches can be used, but one of the more adaptive ones is to use stopping functions and continuous amortization.

Ignoring constants and logarithmic terms, the bit-complexity using the size of the tree in \cite{BurrKrahmer:SqFreeEVAL} and a stopping function for the depth of the tree results in a bit-complexity of 
\begin{equation}\label{eq:SqFreeEVALBit}
d^2\cdot\left(dL-\sum \ln(|\alpha_{i_1}-\alpha_{i_2}|)\right)\cdot\max\{-\ln(|\alpha_{i_1}-\alpha_{i_2}|)\}
\end{equation} where $(\alpha_{i_1},\alpha_{i_2})$ are pairs of roots (assuming enough roots are sufficiently close).  The computation above, using the continuous amortization for bit complexity results in a bit complexity of 
\begin{equation}\label{eq:CAbiteval}
d^2\cdot\sum(\ln(|\alpha_{i_1}-\alpha_{i_2}|))^2
\end{equation}
where the pairs of $(\alpha_{i_1},\alpha_{i_2})$ are the same as above.  The most significant difference between these sums is that in Expression \ref{eq:SqFreeEVALBit}, the final factor is used for all root distances while in Expression \ref{eq:CAbiteval}, each pair of root distances is squared individually.

It is easiest to see the difference between these calculations with an example.
\begin{example}
Consider a polynomial where $\lambda d$ terms of the form $\ln(|\alpha_{i_1}-\alpha_{i_2}|)$ appear in the sums of Expressions \ref{eq:SqFreeEVALBit} and \ref{eq:CAbiteval} (where $0<\lambda\leq1$ is some positive constant).  Further, we assume that exactly one pair of roots has inter-root distance $O\left(e^{-\sqrt{d}L}\right)$ while the remaining $\lambda d-1$ pairs have inter-root distance of $O\left(e^{-L}\right)$.  In this case, the size of the tree may be $\widetilde{O}(dL)$ and the maximum depth will be $\widetilde{O}(\sqrt{d}L)$.  Note that these choices are reasonable because they do not violate the Mahler-Davenport bounds.  Evaluating the bit-complexity using the size of the subdivision tree as the maximum depth of the tree results in a total bit-complexity of $\widetilde{O}(d^4L^2)$.  Evaluating Expression \ref{eq:SqFreeEVALBit} results in a total bit-complexity of $\widetilde{O}\left(d^{7/2}L^2\right)$ while Expression \ref{eq:CAbiteval} results in a total bit-complexity of $\widetilde{O}(d^3L^2)$. Hence, in this case, the continuous amortization computation above is the most adaptive computation.
\end{example}
\end{remark}

\section{Conclusion and Discussion}
In this paper, we have extended the technique of continuous amortization by applying it to several different root isolation algorithms.  In addition, we have extended the technique to be able to compute complexities in higher dimensions as well as the bit-complexity algorithms.  This provides a unifying framework for the analysis of subdivision algorithms, and connects many of the previous computations in the literature.

One of the major goals of continuous amortization is to compute the complexity of two-dimensional algorithms for approximating planar curves, e.g., \cite{plantinga-vegter:isotopic:04,plantinga:thesis:06}.  The current paper is a step in this direction because we have extended the reach of continuous amortization and provided a useful collection of examples which aid in the understanding of how to apply continuous amortization to new problems.

There are many algorithms for which continuous amortization is applicable, but have not yet been studied.  Some reasonable problems for the next step in this program is to apply continuous amortization to continued fractions \cite{vincentcontinuedfractions} and the non-8-point test of $\SqFreeCEVAL$ \cite{SagraloffYap:CEVAL}.  As was described in Section \ref{section:continuousamortization}, the main challenge with continued fraction techniques is that the bisections are not uniform in size.  On the other hand, the main challenge with the full version of \texttt{CEVAL} in \cite{SagraloffYap:CEVAL}, is that, in the notation of \cite{SagraloffYap:CEVAL}, the test $T'_{\sqrt{2}}(m,4dr)$ includes a $d$ on the right-hand-side of the inequality.  If the stopping functions of this paper are adapted na\"ively to the inequality $T'_{\sqrt{2}}(m,4dr)$, then the corresponding bound from continuous amortization increases by a factor of at least $O(d^2)$.  To be able to apply continuous amortization to achieve state-of-the-art complexity bounds for this algorithm, a new stopping function must be developed.

\bibliographystyle{plain}
\bibliography{ContinuousAmortization}

\end{document}